\newlength{\depthofsumsign}
\newlength{\depthofprodsign}
\definecolor{mymagenta}{RGB}{188,21,108}
\definecolor{myblue}{RGB}{38,52,150}
\newcommand{\nsum}{
    \mathop{%
        \raisebox
            {-1.4\depthofsumsign+1.4\depthofsumsign}
            {\scalebox
                {1.4}
                {$\displaystyle\sum$}%
            }
    }
}
\newcommand{\msum}{
    \mathop{%
        \raisebox
            {-1.1\depthofsumsign+1.2\depthofsumsign}
            {\scalebox
                {1.2}
                {$\displaystyle\sum$}%
            }
    }
}
\newcommand{\mprod}{
    \mathop{%
        \raisebox
            {-1.1\depthofprodsign+1.2\depthofprodsign}
            {\scalebox
                {1.2}
                {$\displaystyle\prod$}%
            }
    }
}
\renewcommand{\section}{\@startsection%
{section}%
{1}%
{0em}%
{1.7em}%
{1.2em}%
{\normalfont\large\centering\bfseries}}
\renewcommand{\@seccntformat}[1]%
{\csname the#1\endcsname.\hspace{0.5em}}
\renewcommand{\subsection}{\@startsection%
{subsection}%
{1}%
{0em}%
{1.7em}%
{1.2em}%
{\normalfont\bfseries}}
\numberwithin{equation}{section}
\newtheorem{theorem}{Theorem}[section]
\newtheorem{proposition}{Proposition}[section]
\newtheorem{lemma}{Lemma}[section]
\newtheorem{corollary}{Corollary}[section]
\theoremstyle{definition}
\newtheorem{definition}{Definition}
\newtheorem{hypothesis}{Hypothesis}
\newtheorem{remark}{Remark}
\newcommand{\inner}[2]{\left\langle#1,#2\right\rangle}
\newcommand{\cc}[1]{\overline{#1}}
\newcommand{\reals}{\mathbb{R}}
\newcommand{\nats}{\mathbb{N}}
\newcommand{\complex}{\mathbb{C}}
\newcommand{\ints}{\mathbb{Z}}
\newcommand{\convergesto}[1]{\xrightarrow[#1\to\infty]{}}
\newcommand{\wconvergesto}[1]{\xrightarrow[#1\to\infty]{w}}
\newcommand{\I}{{\rm i}}
\newcommand{\ie}{\emph{i.e.\kern.2em}}
\newcommand{\cf}{\emph{cf.\kern.2em}}
\newcommand{\viz}{\emph{viz.\kern.2em}}
\newcommand{\eg}{\emph{e.g.\kern.2em}}
\DeclareMathOperator{\dom}{dom}
\DeclareMathOperator{\dist}{dist}
\DeclareMathOperator{\diag}{diag}
\begin{document}
\begin{titlepage}
\title{Green matrix estimates of block Jacobi matrices II:
  Bounded gap in the essential spectrum
\footnotetext{%
Mathematics Subject Classification(2010):
41A10  
47B36, 
33E30  
}
\footnotetext{%
Keywords:
Block Jacobi operators;
Generalized eigenvectors;
Decay bounds.
}
}
\author{
\textbf{Jan Janas}
\\
\small Institute of Mathematics\\[-1.7mm]
\small Polish Academy of Sciences (PAN)\\[-1.7mm]
\small Ul. Sw. Tomasza 30, 31-027\\[-1.7mm]
\small Krakow, Poland\\[-1.7mm]
\small \texttt{najanas@cyf-kr.edu.pl}
\\[2mm]
\textbf{Sergey Naboko}
\\
\small Department of Mathematical Physics\\[-1.7mm]
\small Institute of Physics\\[-1.7mm]
\small St. Petersburg State University\\[-1.7mm]
\small Ulyanovskaya 1, St. Petersburg 198904, Russia\\[-1.7mm]
\small \texttt{naboko@snoopy.phys.spbu.ru}
\\[2mm]
\textbf{Luis O. Silva}\thanks{Supported by UNAM-DGAPA-PAPIIT IN110818
  and SEP-CONACYT CB-2015 254062. On sabbatical leave from UNAM with
  the support of PASPA-DGAPA-UNAM}
\\
\small Department of Mathematical Sciences\\[-1.7mm]
\small University of Bath\\[-1.7mm]
\small Claverton Down, Bath BA2 7AY, U.K.\\[-1.7mm]
\small \&\\[-1.7mm]
\small Departamento de F\'{i}sica Matem\'{a}tica\\[-1.7mm]
\small Instituto de Investigaciones en Matem\'aticas Aplicadas y en Sistemas\\[-1.7mm]
\small Universidad Nacional Aut\'onoma de M\'exico\\[-1.7mm]
\small C.P. 04510, M\'exico D.F.\\[-1.7mm]
\small \texttt{silva@iimas.unam.mx}
}
\date{}
\maketitle
\vspace{-6mm}
\begin{center}
\begin{minipage}{5in}
  \centerline{{\bf Abstract}} \smallskip This paper provides decay bounds
  for Green matrices and generalized eigenvectors of block Jacobi
  operators when the real part of the spectral parameter lies in a
  bounded gap of the operator's essential spectrum. The case of the
  spectral parameter being an eigenvalue is also considered. It is
  also shown that if the matrix entries commute, then the estimates
  can be refined. Finally, various examples of block Jacobi operators
  are given to illustrate the results.
\end{minipage}
\end{center}
\thispagestyle{empty}
\end{titlepage}
\section{Introduction}
\label{sec:intro}

This work gives decay bounds for the entries of the Green matrix
corresponding to a self-adjoint block Jacobi operator when the real
part of the spectral parameter lies in a bounded gap of the operator's
essential spectrum. These estimates also show how fast the sequence of
generalized eigenvectors decays. In a previous paper \cite{MR3820400},
decay bounds for Green matrix entries were established for the case of
semibounded block Jacobi operators. It turns out that, on the one
hand, the technique used in \cite{MR3820400} for unbounded gaps in the
essential spectrum cannot be applied to the case of bounded gaps
without substantial modifications. On the other hand, the estimates in
the case of bounded gaps are completely different from the ones of
unbounded gaps.

The results of this work provide a refinement of the Combes-Thomas
method used for obtaining estimates of the Green function's decay. A
crucial fact of this refinement is the use of
Lemma~\ref{lem:janas-naboko-stolz} which was already implemented in
\cite{MR2480099,MR3028179,MR1935594} and is in a certain sense behind
the improvements of the original method (see \cite{MR1441595}).
Combes-Thomas type estimates (\cf \cite{MR1441595,MR0391792}) are used
in the analysis of some discrete random models (see
\cite{MR2509110}). We remark that block type random operators have
been studied earlier (\eg \cite{MR3210957,MR3077279,MR969209}) and
a Combes-Thomas type estimate for random block operators is given in
\cite[Lem.\,5.7]{MR3077279}.

The decay bounds of the generalized eigenvectors established in this
work have interesting applications to the study of spectral phase
transition phenomena (see for instance \cite{MR2480099}). Notably,
block Jacobi matrices permit more freedom in the construction of
models which exhibit multi-threshold spectral phase transitions.

In the Hilbert space $\mathcal{H}$ of square-summable sequences whose
elements belong in turn to a Hilbert space $\mathcal{K}$, we consider
the operator $J$ associated with a second order difference equation
with operator coefficients acting in $\mathcal{K}$ (see
Definition~\ref{def:j-nought}). The operator coefficients, $A_n$ and
$B_n=B_n^*$ ($n\in\nats$), are the entries of a block Jacobi matrix
and, consequently, $J$ is called a block Jacobi operator. It is
assumed that, for any $n\in\nats$, $A_n,B_n$ are bounded and defined
on the whole space $\mathcal{K}$, \ie they belong to
$B(\mathcal{K})$. Additionally, the operators $A_n,B_n$ ($n\in\nats$)
are required to satisfy conditions guaranteeing that $J=J^*$ on a
proper domain and the existence of a gap in the essential spectrum of
$J$.  Under these assumptions, let us paraphrase the statements of
Theorems~\ref{thm:finite-interval} and
\ref{thm:finite-interval-in-spectrum}. Denote by $G_{jk}(\zeta)$ the
($j,k$)-block entry of the resolvent's matrix at the point
$\zeta\in\complex$ (Definition~\ref{def:green-function}). If the real
part of $\zeta$ is in the gap of the essential spectrum of $J$, then
there are positive constants $C_1$ and $C_2$ such that
    \begin{equation*}
      \norm{G_{mj}(\zeta)}_{B(\mathcal{K})}\le
    C_1
   \exp(-\gamma(\zeta)\mkern-12mu\nsum\limits_{k=\min(m,j)}^{\max(m,j)-1}
\mkern-12mu\phi_\delta\left(\norm{A_k}_{B(\mathcal{K})}\right))
\end{equation*}
when $\zeta$ is not an eigenvalue and
\begin{equation*}
    \norm{u_m}_{\mathcal{K}}\le
C_2\exp(-\gamma(\zeta)\nsum_{k=1}^{m-1}
\phi_\delta\left(\norm{A_k}_{B(\mathcal{K})}\right))
\end{equation*}
when $\zeta$ is an eigenvalue. The function $\phi_{\delta}$ is given
by \eqref{eq:phi-delta} and corresponds to a regularization of the
reciprocal function. The function $\gamma$ is given by
\eqref{eq:gamma-noncommut-small-imaginary} and
\eqref{eq:gamma-big-imaginary}, and the constants $C_1$ and $C_2$ do
not depend on $m$ or $j$ (although they may depend on the spectral
parameter $\zeta$). Apart from depending on $\zeta$ and the size of
the gap, $\gamma$ depends on a parameter which permits certain
optimization (Remark~\ref{rem:delta-parameter}).

The decay bounds given above cover a wide range of block Jacobi
operators. In these estimates, we provide an explicit expression for
the decay coefficient, and establish how they depend on the
off-diagonal entries of the block Jacobi operator. The fact that
the bounds depend inversely on the growth of the off-diagonal entries of a
Jacobi matrix was already shown by Combes-Thomas type estimates.  The
results of this work, on the one hand, generalize to the case of block
matrix operators the corresponding decay bounds obtained in
\cite{MR2480099,MR3028179} for the scalar case. On the other hand,
our results extend the region of the spectral parameter so that it is a
complex non necessarily real number.

When the entries of the block Jacobi operator commute, a subsequent
refinement of the results given above can be obtained (see
Section~\ref{sec:green-funct-estim}). Indeed, under the assumptions on
$J$ considered above and the additional requirement that the entries
commute pairwise (see in Theorem~\ref{thm:finite-interval-commutation}
the precise statement), one has
    \begin{equation*}
 \norm{\exp(\gamma(\zeta)\mkern-12mu\nsum\limits_{k=\min(m,j)}^{\max(m,j)-1}
\mkern-12mu\phi_\delta(\abs{A_k}))G_{mj}(\zeta)}_{B(\mathcal{K})}
\le C
\end{equation*}
for $\zeta$ being such that its real part is in the gap and it is not
an eigenvalue. Here again $\gamma$ is given by
\eqref{eq:gamma-noncommut-small-imaginary} and
\eqref{eq:gamma-big-imaginary}, and the constant $C$ does not depend
on $m$ or $j$. In this case of commuting entries, a result
corresponding to the case when $\zeta$ is an eigenvalue is also
established
(Theorem~\ref{thm:finite-interval-commutation-in-spectrum}). We remark
that this \emph{operator} way of estimating the decay of generalized
eigenvectors make it possible to optimize the estimates of these
vectors along any spatial direction in $\mathcal{K}$ by taking into
account the matrix structure of the blocks.

Section~\ref{sec:examples} provides various examples which illustrate
the results of this work. The asymptotic of generalized eigenvectors
are obtained heuristically on the basis of methods which have been
used to find the asymptotics of solutions to difference equations. We
consider the examples of this section to be of independent interest.

Finally, in Section~\ref{sec:discr-vers-theor}, we briefly describe an
alternative approach to establishing decay bounds for block Jacobi
operators. This approach makes use of the discrete nature of the
problem and allows us to obtain more precise results in certain cases.

\section{Preliminaries}
\label{sec:preliminaries}

\subsection{Notation}
\label{sec:notation}

By $\mathcal{H}$ we denote a separable infinite dimensional Hilbert
space which is always decomposed as an infinite orthogonal sum:
  \begin{equation*}
    \mathcal{H}=\bigoplus_{m=1}^\infty\mathcal{K}_m\,,
  \end{equation*}
  where $\mathcal{K}_m=\mathcal{K}$ for all $m\in\nats$ and
  $\mathcal{K}$ is either an infinite or finite dimensional subspace
  of $\mathcal{H}$. A Hilbert space so decomposed is usually denoted by
$l_2(\nats,\mathcal{K})$.

The symbol $\norm{\cdot}$ is used to denote the norm in $\mathcal{H}$,
while the norm in $\mathcal{K}$ is denoted by
$\norm{\cdot}_{\mathcal{K}}$. $B(\mathcal{H}$) and $B(\mathcal{K})$
denote the spaces of bounded linear operators defined on the whole
space $\mathcal{H}$ and $\mathcal{K}$, respectively. The norms in
$B(\mathcal{H})$ and $B(\mathcal{K})$ are denoted by
$\norm{\cdot}_{B(\mathcal{H})}$ and $\norm{\cdot}_{B(\mathcal{K})}$,
respectively.

A vector $u$ in
$\mathcal{H}$ can be written as a sequence
\begin{equation}
  \label{eq:sequence}
  u=\{u_m\}_{m=1}^\infty\,,\qquad
u_m\in\mathcal{K}_m
\end{equation}
or as an infinite column vector, \ie
$u=(u_1,u_2,u_3,\dots)^{\intercal}$. Note that
\begin{equation*}
  \norm{u}_{\mathcal{H}}^2=\sum_{m=1}^\infty\norm{u_m}_{\mathcal{K}}^2\,.
\end{equation*}

Throughout this work, we use $I$ to denote the identity operator in
the spaces $\mathcal{H}$ and $\mathcal{K}$ since it will cause no
confusion to use the same letter for these operators.  The orthogonal
projector in $\mathcal{H}$ onto the subspace $\mathcal{K}_m$ is
denoted by $P_m$ while the symbol $\widetilde{P}_M$ stands for the
orthogonal projector onto
\begin{equation*}
  \bigoplus_{m=1}^M\mathcal{K}_m\,.
\end{equation*}

Given a closed, densely defined operator $T$ in a Hilbert space, we
denote by $\abs{T}$ the operator $(T^*T)^{1/2}$ given by applying the
functional calculus to the self-adjoint operator $T^*T$. Finally, for
any self-adjoint operator $A$, we denote its essential spectrum, \ie
the union of the continuous spectrum and the eigenvalues of infinite
multiplicity, by $\sigma_{ess}(A)$.

\subsection{Block Jacobi matrices}
\label{sec:block-jacobi-matr}

Let us turn to the definition of block Jacobi operators. For any
sequence \eqref{eq:sequence}, consider the second order difference
expressions
\begin{subequations}
  \label{eq:difference-expr}
\begin{align}
   \label{eq:difference-recurrence}
  (\Upsilon u)_k&:= A_{k-1}^* u_{k-1} + B_k u_k + A_ku_{k+1}
  \quad k \in \mathbb{N} \setminus \{1\},\\
   \label{eq:difference-initial}
   (\Upsilon u)_1&:=B_1 u_1 + A_1 u_2\,,
\end{align}
\end{subequations}
where $B_k=B_k^*$ and $A_k$ are in $B(\mathcal{K})$ for any $k\in\mathbb{N}$.
\begin{definition}
 \label{def:j-nought}
  In $\mathcal{H}$, define the operator $J$ such that
  \begin{equation*}
    \dom(J):=\{u\in\mathcal{H}:\Upsilon
  u\in\mathcal{H}\}
  \end{equation*}
 and $Ju=\Upsilon u$ for any $u\in\dom(J)$. Since $\dom(J)$ is dense
 in $\mathcal{H}$, the adjoint of $J$, $J^*$, is an operator. One verifies
 from the difference expression that $J^*\subset J$. The operator $J$
 is said to be defined in the maximal domain given by the difference
 expression $\Upsilon$, whereas $J^*$ is the minimal closed operator
 associated with $\Upsilon$ (see \cite[Chap.\,7 Sec.\,2.5]{MR0222718}).
\end{definition}
The block tridiagonal matrix
\begin{equation}
  \label{eq:block-jm}
\begin{pmatrix}
B_1&A_1&0&0&\cdots\\
A_1^*&B_2&A_2&0&\\
0&A_2^*&B_3&A_3&\ddots\\
0&0&A_3^*&B_4&\ddots\\
\vdots&&\ddots&\ddots&\ddots
\end{pmatrix}
\end{equation}
can be regarded as the matrix representation of the operator $J^*$
when every element $u$ in $\mathcal{H}$ is written as in \eqref{eq:sequence}.
(\cf \cite[Sec. 47]{MR1255973} where the matrix
representation of an unbounded symmetric operator is given).
\begin{remark}
\label{rem:selfadjointness-carleman}
In this work, we impose conditions on the sequences
$\{A_m\}_{m=1}^\infty$ and $\{B_m\}_{m=1}^\infty$ so that $J$ is
self-adjoint. A sufficient condition for this to happen is the
generalized Carleman criterion \cite[Chap.\,7 Thm. 2.9]{MR0222718},
viz., if $\sum_{m=1}^\infty 1/\norm{A_m}_{B(\mathcal{K})}=+\infty$,
then $J$ is self-adjoint. A particular case of an operator satisfying
the Carleman criterion is a bounded block Jacobi operator for which
the sequences $\{\norm{A_m}_{B(\mathcal{K})}\}_{m=1}^\infty$ and
$\{\norm{B_m}_{B(\mathcal{K})}\}_{m=1}^\infty$ are bounded \cite[Chap.\,7
Sec.\,2.11]{MR0222718}.
\end{remark}
Sometimes, the same notation is used for the matrix and the operator. This
cannot lead to confusion when $J=J^*$ (in this case the maximal and
minimal domains coincide) and in the case of diagonal matrices.
Thus,
\begin{equation*}
  \diag\{C_m\}_{m=1}^\infty:=
\begin{pmatrix}
C_1&0&0&0&\cdots\\
0&C_2&0&0&\\
0&0&C_3&0&\ddots\\
0&0&0&C_4&\ddots\\
\vdots&&\ddots&\ddots&\ddots
\end{pmatrix}\,,
\end{equation*}
where $C_m\in B(\mathcal{K})$ for any $m\in\nats$,
is used for denoting the operator and the matrix (the operator being
$\bigoplus_{m=1}^\infty C_m$ with $C_m\in B(\mathcal{K})$ for all $m\in\nats$).

In $\mathcal{H}$, consider the unilateral shift operator $S$ and its
adjoint $S^*$ given by
\begin{equation*}
  S  \begin{pmatrix}
    u_1\\ u_2\\ u_3\\\vdots
  \end{pmatrix}=
  \begin{pmatrix}
   0\\ u_1\\ u_2\\\vdots
  \end{pmatrix}
\qquad
  S^*\begin{pmatrix}
    u_1\\ u_2\\ u_3\\\vdots
  \end{pmatrix}=
  \begin{pmatrix}
   u_2\\ u_3\\ u_4\\\vdots
  \end{pmatrix}\,.
\end{equation*}
It can be verified that the operator
\begin{equation}
  \label{eq:representation-j}
  \diag\{B_m\}_{m=1}^\infty + S\diag\{A^*_m\}_{m=1}^\infty +
\diag\{A_m\}_{m=1}^\infty S^*
\end{equation}
coincides with $J$ when it is self-adjoint.


\begin{definition}
  \label{def:green-function}
  Assume that the operator $J$ given in Definition~\ref{def:j-nought}
  is self-adjoint. For any $\zeta$ in the resolvent set of $J$, define
  \begin{equation*}
    G_{jk}(\zeta):=P_j(J-\zeta I)^{-1}P_k\,.
  \end{equation*}
\end{definition}
Note that by this definition $G_{jk}(\zeta)^*=G_{kj}(\cc{\zeta})$,
and therefore
\begin{equation}
  \label{eq:norm-resolvent-adjoint}
  \norm{G_{jk}(\zeta)}_{B(\mathcal{K})}=\norm{G_{kj}(\cc{\zeta})}_{B(\mathcal{K})}\,.
\end{equation}
The operator $G_{jk}(\zeta)$ can be regarded as the entry of the block
matrix representation of the resolvent of $J$ at $\zeta$ corresponding
to the $j$-th row and the $k$-th column. Accordingly, we refer to
$\{G_{jk}(\zeta)\}_{j,k=1}^\infty$ as the block Green matrix
corresponding to $J$ at $\zeta$.

\section{Estimates in a bounded  gap of the essential spectrum}
\label{sec:estimates-bounded-gap-no-commutation}
This section establishes decay bounds for the Green matrix entries
(Definition~\ref{def:green-function})
corresponding to a self-adjoint Jacobi operator $J$
(Definition~\ref{def:j-nought}) with a bounded gap in the essential
spectrum. The real part of the spectral parameter is assumed to be in
this gap and we cover both cases; when this parameter is an eigenvalue
and when it is not.

To streamline the writing of formulae, let us introduce the following
functions. For $x>0$
\begin{equation}
  \label{eq:psi-tilde-psi}
  \psi(x):=x^2 e^{x}\,,\qquad
 \widetilde{\psi}(x):=xe^{x}
\end{equation}
and
\begin{equation}
  \label{eq:phi-delta}
   \phi_\delta(x):=
   \begin{cases}
     1/\delta & \text{ if } 0\le x<\delta\\
     1/x& \text{ if } \delta\le x\,.
   \end{cases}
\end{equation}
Also, let $r,s\in\reals$ be such that $r<s$. For $r<x<s$, define
\begin{equation}
  \label{eq:w-function}
   w(x):=\sqrt{(x-r)(s-x)}>0\,.
 \end{equation}

 The next assertion is part of \cite[Lem.\,3.3]{MR2480099} and turns
 out to be crucial for proving the results of this section.  Earlier
 versions of the statement can be found in \cite{MR1935594}. 
\begin{lemma}
  \label{lem:janas-naboko-stolz}
  Let $T$ be an invertible self-adjoint operator such that
  \begin{align*}
    d_+:&=\dist(0,\sigma(T)\cap(0,+\infty))>0\\
    d_-:&=\dist(0,\sigma(T)\cap(-\infty,0))>0\,.
  \end{align*}
If $V$ is a self-adjoint contraction and $\beta\in\reals$ is such that
\begin{equation}
  \label{eq:sharp-beta}
  \abs{\beta}<\sqrt{d_+d_-}\,,
\end{equation}
then the operator $T+\I\beta V$ is continuously invertible.
\end{lemma}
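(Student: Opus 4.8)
The plan is to exploit the spectral gap of $T$ around the origin by splitting the Hilbert space according to the sign of $T$, and then to eliminate one of the two resulting components by a Schur-complement computation. The decisive point will be that this elimination produces a correction term controlled by the \emph{product} of the reciprocals of $d_+$ and $d_-$, which is precisely why the sharp threshold $\abs{\beta}<\sqrt{d_+d_-}$ appears.

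Concretely, first I would let $P_+$ and $P_-$ be the spectral projectors of $T$ onto the positive and negative parts of its spectrum; since $0\notin\sigma(T)$ these satisfy $P_++P_-=I$, and they reduce $T$ into blocks $T_+,T_-$ acting on $\mathcal{H}_\pm:=P_\pm\mathcal{H}$ with $T_+\ge d_+I$ and $T_-\le -d_-I$. Writing $V$ in the same $2\times 2$ block form, its diagonal blocks $V_{++},V_{--}$ are self-adjoint, its off-diagonal blocks satisfy $V_{-+}=V_{+-}^*$, and $\norm{V_{+-}},\norm{V_{-+}}\le 1$ because $V$ is a contraction. The diagonal blocks of $T+\I\beta V$, namely $D_\pm:=T_\pm+\I\beta V_{\pm\pm}$, are then invertible: since $V_{\pm\pm}$ is self-adjoint, $\I\beta V_{\pm\pm}$ is skew-adjoint, so the real part of the numerical range of $D_+$ is bounded below by $d_+$ and that of $D_-$ above by $-d_-$. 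A Cauchy--Schwarz estimate on $\operatorname{Re}\inner{D_\pm w}{w}$ then yields the bounded inverses $\norm{D_+^{-1}}\le 1/d_+$ and $\norm{D_-^{-1}}\le 1/d_-$.

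The heart of the argument is the elimination step. For $u=u_++u_-\in\dom(T)$ I would write out the two components of $(T+\I\beta V)u=:f$, solve the second for $u_-$ by means of $D_-^{-1}$, and substitute into the first; this recasts the system as $(I+\beta^2 D_+^{-1}V_{+-}D_-^{-1}V_{-+})u_+=D_+^{-1}f_+-\I\beta D_+^{-1}V_{+-}D_-^{-1}f_-$. The correction operator obeys $\norm{\beta^2 D_+^{-1}V_{+-}D_-^{-1}V_{-+}}\le \beta^2/(d_+d_-)$, and the hypothesis $\abs{\beta}<\sqrt{d_+d_-}$ is \emph{exactly} the assertion that this quantity is strictly less than $1$. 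A Neumann series therefore inverts the left-hand operator, and tracing the estimate back through $u_-$ gives an a priori bound $\norm{u}\le C\norm{(T+\I\beta V)u}$ valid for every $u\in\dom(T)$. This is the place where the geometric-mean threshold enters, and it is the crux of the proof.

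Finally I would upgrade this a priori bound to continuous invertibility. The estimate shows that $T+\I\beta V$ is injective with closed range, and since it is a bounded perturbation of the self-adjoint operator $T$ it is closed. Its adjoint is $T-\I\beta V$, which has exactly the same structure with $\beta$ replaced by $-\beta$, and so is injective by the identical estimate; this forces the range of $T+\I\beta V$ to be dense. A closed operator with closed and dense range is surjective, whence $T+\I\beta V$ is a bijection with bounded inverse. The step I expect to demand the most care is the treatment of a possibly \emph{unbounded} $T$: the blocks $T_\pm$ are then unbounded, so I would keep every manipulation at the level of the \emph{bounded} inverses $D_\pm^{-1}$ acting on the data $f_\pm$, and close the argument through the adjoint rather than by formally inverting an unbounded block matrix.
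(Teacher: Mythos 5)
Your proposal is correct, but note that the paper does not actually prove this lemma: it is imported verbatim from \cite[Lem.\,3.3]{MR2480099}, with only a pointer to ``geometrical heuristics'' in \cite{MR2480099,MR3028179}, so there is no in-paper proof to match. Your Schur-complement argument is a sound, self-contained derivation and it hits the sharp constant for exactly the right reason: splitting along the spectral projections $P_\pm$ of $T$, the diagonal blocks $D_\pm=T_\pm+\I\beta V_{\pm\pm}$ satisfy $\norm{D_\pm^{-1}}\le 1/d_\pm$ because the skew-adjoint perturbation does not move the real part of the numerical range, and eliminating $u_-$ produces the correction $\beta^2 D_+^{-1}V_{+-}D_-^{-1}V_{-+}$ of norm at most $\beta^2/(d_+d_-)$, so the Neumann series converges precisely under \eqref{eq:sharp-beta}; your route through the a priori bound and the adjoint $T-\I\beta V$ then correctly upgrades this to continuous invertibility even for unbounded $T$ (the spectral projections commute with $T$, so the domain bookkeeping you flag goes through). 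The original argument of \cite{MR2480099} is organized instead around a quadratic-form/numerical-range estimate on the full operator relative to the positive--negative splitting, with the geometric mean $\sqrt{d_+d_-}$ emerging from an optimal weighting of the two spectral components; your version trades that for linear algebra plus a Neumann series, which is arguably more mechanical and makes transparent why equality in \eqref{eq:sharp-beta} must fail (consistent with the paper's remark citing \cite[Rem.\,4]{MR3028179}). One small point of rigor: when you first assert that $D_\pm$ are invertible, the lower bound $\Re\inner{D_+w}{w}\ge d_+\norm{w}^2$ alone gives only injectivity and closed range; you need the same adjoint argument (or the observation that $\lambda\in\rho(D_+)$ for $\lambda\ll 0$, so the half-plane $\Re z<d_+$ lies in the resolvent set) already at that stage, not only in the final paragraph --- but since you invoke exactly this mechanism at the end, the gap is one of ordering, not of substance.
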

One can find geometrical heuristics of this result in \cite{MR2480099,MR3028179}.
It is worth mentioning that \eqref{eq:sharp-beta}
cannot be replaced by the weaker condition $\abs{\beta}\le\sqrt{d_+d_-}$
(see \cite[Rem.\,4]{MR3028179}).

To streamline the statements of our results, consider the following
hypothesis.
\begin{hypothesis}
 \label{hyp:main}
  The operator $J$ given in Definition~\ref{def:j-nought}
  is self-adjoint and there is an interval of the real line $(r,s)$ such that
  $(r,s)\cap\sigma_{ess}(J)=\emptyset$.
\end{hypothesis}

For the following assertion, bare in mind the functions given in  \eqref{eq:psi-tilde-psi},
\eqref{eq:phi-delta}, and \eqref{eq:w-function}.

\begin{theorem}
  \label{thm:finite-interval}
  Assume that Hypothesis~\ref{hyp:main} holds true. Fix an arbitrary $\delta>0$,
  $\epsilon\in(0,1/2)$, and $\eta\in(0,1)$.  If
  $\zeta\not\in\sigma(J)$ and $\Re\zeta\in(r,s)$, then
    \begin{equation*}
      \norm{G_{mj}(\zeta)}_{B(\mathcal{K})}\le
    C
   \exp(-\gamma(\zeta)\mkern-12mu\nsum_{k=\min(m,j)}^{\max(m,j)-1}
\mkern-12mu
\phi_\delta\left(\norm{A_k}_{B(\mathcal{K})}\right))\,,
\end{equation*}
where
\begin{equation}
\label{eq:gamma-noncommut-small-imaginary}
  \gamma(\zeta):=\min\left\{\delta\psi^{-1}\left(\frac{w^2(\Re\zeta)\epsilon}{2\delta(s-r)}\right),
\delta\widetilde{\psi}^{-1}\left(\frac{w(\Re\zeta)(1-2\epsilon)}{2\delta}\right)\right\}
\end{equation}
when $\abs{\Im\zeta}\le
w(\Re\zeta)\frac{\epsilon}{2}$, and
\begin{equation}
  \label{eq:gamma-big-imaginary}
  \gamma(\zeta):=\delta\widetilde{\psi}^{-1}\left(\frac{w(\Re\zeta)\epsilon(1-\eta)}{4\delta}\right)
\end{equation}
otherwise. The constant $C$ depends neither on $m$
nor on $j$.
\end{theorem}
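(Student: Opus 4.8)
The plan is to prove the estimate by the Combes--Thomas strategy of conjugating the resolvent with an exponential weight and then invoking Lemma~\ref{lem:janas-naboko-stolz} to control the conjugated operator. Set $\rho_m:=\gamma(\zeta)\sum_{k=1}^{m-1}\phi_\delta(\norm{A_k}_{B(\mathcal{K})})$ and let $W:=\diag\{e^{\rho_m}I\}_{m=1}^\infty$. Because $\rho_m$ is nondecreasing, $W^{-1}$ is bounded, and since $P_m W=e^{\rho_m}P_m$, we have $P_mW(J-\zeta)^{-1}W^{-1}P_j=e^{\rho_m-\rho_j}G_{mj}(\zeta)$. Hence the asserted bound follows once I show that the conjugated operator is boundedly invertible with $\norm{(WJW^{-1}-\zeta)^{-1}}_{B(\mathcal{H})}\le C$ for a constant independent of $m,j$: then $\norm{G_{mj}(\zeta)}_{B(\mathcal{K})}\le e^{-(\rho_m-\rho_j)}C$, and for $m\ge j$ the exponent is exactly $\gamma(\zeta)\sum_{k=j}^{m-1}\phi_\delta(\norm{A_k}_{B(\mathcal{K})})$, while the case $m<j$ is recovered from \eqref{eq:norm-resolvent-adjoint} together with $\gamma(\cc{\zeta})=\gamma(\zeta)$. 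The rate $\gamma(\zeta)$ thus enters only through the weight, and everything reduces to a single operator-norm bound.

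Using representation \eqref{eq:representation-j}, conjugation changes only the off-diagonal blocks: the $(m,m+1)$ block becomes $e^{-t_m}A_m$ and the $(m+1,m)$ block becomes $e^{t_m}A_m^*$, where $t_m:=\rho_{m+1}-\rho_m=\gamma(\zeta)\phi_\delta(\norm{A_m}_{B(\mathcal{K})})$. Splitting into self-adjoint and skew-adjoint parts I write $WJW^{-1}=L+\I M$, where $L$ is the self-adjoint block Jacobi operator with diagonal $B_m$ and off-diagonal blocks $\cosh(t_m)A_m$, and $M=M^*$ carries the blocks $\I\sinh(t_m)A_m$; both $L-J$ and $M$ are bounded, so $WJW^{-1}$ is realized as the bounded perturbation $J+(L-J)+\I M$ of $J$ and is closed on $\dom(J)$. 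The definition \eqref{eq:phi-delta} is engineered so that $t_m\norm{A_m}_{B(\mathcal{K})}\le\gamma(\zeta)$ and $t_m\le\gamma(\zeta)/\delta$ hold simultaneously; combined with the elementary inequalities $\cosh t-1\le\tfrac12\psi(t)$ and $\sinh t\le\widetilde\psi(t)$ (with $\psi,\widetilde\psi$ as in \eqref{eq:psi-tilde-psi}), a short computation yields $\norm{L-J}_{B(\mathcal{H})}\le\delta\psi(\gamma(\zeta)/\delta)$ and $\norm{M}_{B(\mathcal{H})}\le 2\delta\widetilde\psi(\gamma(\zeta)/\delta)$.

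Invertibility is then established in the two regimes of the case split. When $\abs{\Im\zeta}\le w(\Re\zeta)\epsilon/2$, I write $WJW^{-1}-\zeta=T+\I\beta V$ with $T:=L-\Re\zeta$ self-adjoint, $\beta:=\norm{M-\Im\zeta\,I}$, and $V:=\beta^{-1}(M-\Im\zeta\,I)$ a self-adjoint contraction, and apply Lemma~\ref{lem:janas-naboko-stolz}. The first entry of the minimum in \eqref{eq:gamma-noncommut-small-imaginary} forces $\norm{L-J}\le w^2(\Re\zeta)\epsilon/(2(s-r))$, which keeps the spectral gap of $T$ open and gives $\sqrt{d_+d_-}\ge w(\Re\zeta)\sqrt{1-\epsilon/2}$; the second entry forces $\norm{M}\le w(\Re\zeta)(1-2\epsilon)$, whence $\beta\le\norm{M}+\abs{\Im\zeta}\le w(\Re\zeta)(1-3\epsilon/2)<\sqrt{d_+d_-}$ for $\epsilon\in(0,1/2)$. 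The lemma then gives continuous invertibility, and $C$ is the resulting bound on $\norm{(T+\I\beta V)^{-1}}$. When $\abs{\Im\zeta}>w(\Re\zeta)\epsilon/2$, the definition \eqref{eq:gamma-big-imaginary} gives $\norm{M}\le w(\Re\zeta)\epsilon(1-\eta)/2$, so $M-\Im\zeta\,I$ is definite with $\dist(0,\sigma(M-\Im\zeta\,I))\ge w(\Re\zeta)\epsilon\eta/2$; dissipativity of $T+\I(M-\Im\zeta\,I)$ then yields $\norm{(WJW^{-1}-\zeta)^{-1}}\le 2/(w(\Re\zeta)\epsilon\eta)$ directly, with no appeal to the gap.

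The main obstacle is the first regime: one must choose the single rate $\gamma(\zeta)$ so that the $\cosh$-perturbation $L-J$ does not close the spectral gap of $T$ around $0$ while the skew part $M$ together with $\Im\zeta$ stays below the sharp threshold $\sqrt{d_+d_-}$ of Lemma~\ref{lem:janas-naboko-stolz}; this is exactly why $\gamma(\zeta)$ appears as a minimum of a $\psi^{-1}$ term (gap preservation) and a $\widetilde\psi^{-1}$ term (control of the skew part). A delicate point is that Hypothesis~\ref{hyp:main} only excludes $\sigma_{ess}(J)$ from $(r,s)$, so $T=L-\Re\zeta$ may a priori carry eigenvalues near $0$; the clean bound $\sqrt{d_+d_-}\ge w(\Re\zeta)\sqrt{1-\epsilon/2}$ is immediate when $(r,s)$ is free of eigenvalues, and in general the finitely many eigenvalues in a compact subinterval of the gap must be isolated, their effect being absorbed into the $\zeta$-dependent constant $C$, before the lemma is applied.
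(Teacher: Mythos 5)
Your overall strategy is the same as the paper's (exponential conjugation plus Lemma~\ref{lem:janas-naboko-stolz}), and your norm estimates for the symmetric and skew parts of the conjugated operator match the paper's bounds for $\Re F$ and $\Im F$, as does your dissipativity argument in the regime $\abs{\Im\zeta}>w(\Re\zeta)\epsilon/2$. But there is a genuine gap in the main regime $\abs{\Im\zeta}\le w(\Re\zeta)\epsilon/2$, and you half-acknowledge it in your last paragraph without resolving it. Hypothesis~\ref{hyp:main} permits $J$ to have discrete eigenvalues inside $(r,s)$, possibly accumulating at the endpoints, and the theorem allows $\Re\zeta$ to sit arbitrarily close to (or exactly at) such an eigenvalue, provided only $\zeta\not\in\sigma(J)$. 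In that situation your operator $T=L-\Re\zeta$ has spectrum arbitrarily close to $0$ (or $0$ itself), so $d_+d_-$ is not bounded below by anything like $w^2(\Re\zeta)(1-\epsilon/2)$, the hypothesis \eqref{eq:sharp-beta} of Lemma~\ref{lem:janas-naboko-stolz} fails, and the lemma simply cannot be applied. Saying the eigenvalues' ``effect is absorbed into the $\zeta$-dependent constant $C$ before the lemma is applied'' is not an argument: the lemma is an invertibility statement, and no adjustment of a multiplicative constant can restore a gap of $T$ around $0$. Note also that the conclusion you must reach is stronger than what a naive perturbation of $T$ could give: the decay rate $\gamma(\zeta)$ in the theorem depends only on $w(\Re\zeta)$, the distance to the edges $r,s$, and \emph{not} on the distance from $\Re\zeta$ to eigenvalues in the gap; only $C$ may feel the eigenvalues.

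The missing ingredient is exactly the spectral surgery that occupies the first part of the paper's proof. There one writes $J=X+Y$ via the spectral measure, where $X$ (see \eqref{eq:operatorX}) coincides with $J$ outside $(r,s)$ but replaces the spectral part of $J$ inside the gap by the constants $r$ and $s$, so that $X$ has a genuine spectral gap $(r,s)$; the correction $Y$ is compact because $(r,s)$ contains no essential spectrum. One then splits $Y=Y\widetilde{P}_M+Y(I-\widetilde{P}_M)$, absorbs the norm-small tail $Y(I-\widetilde{P}_M)$ into the operator $\widetilde{J}=X+Y(I-\widetilde{P}_M)+F$ to which Lemma~\ref{lem:janas-naboko-stolz} is applied (this is where your $d_\pm$ and $\norm{R}$ estimates are valid, since the gap of $X$ really is $(r,s)$), and finally recovers the true conjugated resolvent from $(\widetilde J-\zeta I)^{-1}$ through the resolvent-type identity \eqref{eq:phi-N-resolvent}, which is where the hypothesis $\zeta\not\in\sigma(J)$ and the factor $\norm{(J-\zeta I)^{-1}}$ (hence all eigenvalue effects) enter the constant $C$. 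A secondary, fixable lapse: your weight $W$ is unbounded, so the identification of $W(J-\zeta I)^{-1}W^{-1}$ with the resolvent of the bounded-perturbation realization $J+(L-J)+\I M$ requires justification; the paper avoids this by truncating the weight at level $N$ (see \eqref{eq:def-tilde-Phi-m}), proving a bound uniform in $N$, and letting $N\to\infty$ at the end. You would need to incorporate both devices to turn your sketch into a proof.
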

\begin{proof}
For any fixed $N\in\nats$, let
\begin{equation}
\label{eq:def-tilde-Phi-m}
  \Phi_m:=
\begin{cases}
\exp\left(-\gamma\msum\limits_{k=1}^{m-1}
\phi_\delta\left(\norm{A_k}_{B(\mathcal{K})}\right)\right)I\,,&
m\le N\,,\\[4mm]
\exp\left(-\gamma\msum\limits_{k=1}^{N-1}
\phi_\delta\left(\norm{A_k}_{B(\mathcal{K})}\right)\right)I\,,&
m> N\,,
\end{cases}
\end{equation}
where $\phi_\delta$ is given in \eqref{eq:phi-delta} and the value of
$\gamma$ is to be determined later. We assume $\Phi_1=I$. Clearly,
$\Phi_m$ is a scalar operator for $m\in\nats$.  Consider the following
operator in $\mathcal{H}$
\begin{equation*}
\Phi:=\diag\{\Phi_m\}_{m=1}^\infty\,.
\end{equation*}
This operator is a contraction and depends on $\delta$, $\gamma$, and $N$. When needed, we
indicate the dependence on $N$ explicitly, i.\,e.,
$\Phi=\Phi(N)$. Note that the operator
$\Phi(N)$ is a boundedly invertible contraction for any
finite $N$.

Let us introduce the operator:
\begin{equation}
\label{eq:relation-tilde-f-phi}
  F:=\Phi^{-1}J\Phi - J\,.
\end{equation}
 Using \eqref{eq:representation-j}, one verifies that
\begin{equation}
  \label{eq:definition-of-f}
  F=S\diag\{\Phi_{m+1}^{-1}A_m^*\Phi_{m}-A_m^*\}
+\diag\{\Phi_m^{-1}A_m\Phi_{m+1}-A_m\}S^*\,.
\end{equation}
By \eqref{eq:def-tilde-Phi-m}, $F$ depends on $N$ and is in
$B(\mathcal{H})$ since the sequences \eqref{eq:sequence} in the range
of $F$ have a finite number of nonzero elements. Note that
\eqref{eq:relation-tilde-f-phi} implies that
\begin{equation}
 \label{eq:tilde-phi-operator-phi}
  \Phi^{-1}(J-\zeta I)\Phi=J+F-\zeta I\,.
\end{equation}
Define
\begin{align}
  X&:=J(E(-\infty,r]+E[s,+\infty)) + rE\left(r,\frac{r+s}{2}\right] +
     sE\left(\frac{r+s}{2},s\right)\,,\label{eq:operatorX}\\
  Y&:=(J-rI)E\left(r,\frac{r+s}{2}\right] +
  (J-sI)E\left(\frac{r+s}{2},s\right)\,,
\end{align}
where $E$ is the spectral measure of $J$. For $M\in\nats$, we also
introduce the following operators. Below, we take $M$ large enough
(see \eqref{eq:first-choice-M-finite-interval},
\eqref{eq:second-choice-M-finite-interval}, \eqref{eq:estimate-b-pm}).
\begin{align}
  \widetilde{J}&:=X+Y(I-\widetilde{P}_M)+F\,, \label{eq:tildeJ}\\
  T&:=X+\Re\left[Y(I-\widetilde{P}_M)+F-\zeta I\right]\,,\\
  R&:=\Im\left[F+Y(I-\widetilde{P}_M)-\zeta I\right]\,,\label{eq:s-definition}
\end{align}
where $\widetilde{P}_M$ is
given in Section~\ref{sec:notation}. Note that $Y$ is a compact operator
and $\widetilde{J}$ depends on $N$ since $F$ also does.

Our aim is to show that $\widetilde{J}-\zeta I$ is continuously
invertible for $\gamma$ indicated in the assertion of the
theorem. Once the inverse of $\widetilde{J}-\zeta I$ exists and is
bounded, the result is established by the next argumentation. Taking
into account that $X+Y=J$ and \eqref{eq:tildeJ}, one obtains from
\eqref{eq:definition-of-f} the equality
\begin{equation*}
  \Phi^{-1}(J-\zeta
  I)\Phi=\widetilde{J}-\zeta I+ Y\widetilde{P}_M\,.
\end{equation*}
Therefore, using the invertibility of $\widetilde{J}-\zeta I$ and the
fact that $\zeta\not\in\sigma(J)$, one concludes that
\begin{equation}
 \label{eq:phi-N-resolvent}
  \Phi^{-1}(J-\zeta
  I)^{-1}\Phi=(\widetilde{J}-\zeta I)^{-1}
\left[I+ Y\widetilde{P}_M(\widetilde{J}-\zeta I)^{-1}\right]^{-1}\,.
\end{equation}
Left multiplying the last equality by $Y\widetilde{P}_M$ yields
\begin{equation*}
  Y\widetilde{P}_M\Phi^{-1}(J-\zeta
  I)^{-1}\Phi=I-
\left[I+ Y\widetilde{P}_M(\widetilde{J}-\zeta I)^{-1}\right]^{-1}\,,
\end{equation*}
where we have used the identity $Q(I+Q)^{-1}=I-(I+Q)^{-1}$ formally valid for
any linear operator $Q$.
Inserting the expression for $\left[I+
  Y\widetilde{P}_M(\widetilde{J}-\zeta I)^{-1}\right]^{-1}$ from the
last equality into \eqref{eq:phi-N-resolvent}, one arrives at
\begin{equation*}
  \Phi^{-1}(J-\zeta
  I)^{-1}\Phi=(\widetilde{J}-\zeta I)^{-1}
\left[I-Y\widetilde{P}_M\Phi^{-1}(J-\zeta
  I)^{-1}\Phi\right]\,.
\end{equation*}
In the right-hand side of the previous equality, the inverse of the
operator in square brackets in \eqref{eq:phi-N-resolvent} has been
substituted by an explicit expression.  The rigorous proof of the
existence of this inverse follows by checking directly that the
obtained expression is the left and right inverse simultaneously.

Clearly,
\begin{align*}
  &\norm{I-Y\widetilde{P}_M\Phi^{-1}(J-\zeta
  I)^{-1}\Phi}_{B(\mathcal{H})}\\ &\le
 1+\norm{Y}_{B(\mathcal{H})}\norm{\widetilde{P}_M\Phi^{-1}(N)}_{B(\mathcal{H})}
\norm{(J-\zeta
  I)^{-1}}_{B(\mathcal{H})}\norm{\Phi(N)}\\
&\le 1+\norm{Y}_{B(\mathcal{H})}\norm{\widetilde{P}_M\Phi^{-1}(M)}_{B(\mathcal{H})}
\norm{(J-\zeta
  I)^{-1}}_{B(\mathcal{H})}\\
&:=\widetilde{C}\,.
\end{align*}
Note that $\widetilde{C}$ does not depend on $N$.
Thus,
\begin{equation}
 \label{eq:bounded-resolvent-indep-N}
  \norm{P_j\Phi^{-1}(N)(J-\zeta
  I)^{-1}\Phi(N)P_k}_{B(\mathcal{H})}\le
\widetilde{C}
\norm{(\widetilde{J}-\zeta I)^{-1}}_{B(\mathcal{H})}\,.
\end{equation}
In view of
Definition~\ref{def:green-function} and \eqref{eq:def-tilde-Phi-m},
by letting $N\to\infty$, one obtains
\begin{equation}
 \label{eq:final-inequality-first}
  \norm{\exp\left(\gamma\sum_{m=1}^{j-1}\phi_\delta(\norm{A_m}_{B(\mathcal{K})})\right)
G_{jk}(\zeta)
\exp\left(-\gamma\sum_{m=1}^{k-1}\phi_\delta(\norm{A_m}_{B(\mathcal{K})})\right)}_{B(\mathcal{K})}
\le C
\end{equation}
for all $j,k\in\nats$. The estimate of the theorem is proven by
combining both scalar exponential factors in
\eqref{eq:final-inequality-first}. Note that we have proven the
estimate for $j\ge k$, but the other case is also covered by recurring
to \eqref{eq:norm-resolvent-adjoint}.

Let us turn to establishing the continuous invertibility of
$\widetilde{J}-\zeta I$. We consider two mutually exclusive cases
leading to the two different definitions of the function $\gamma$. If
$\abs{\Im\zeta}\le w(\Re\zeta)\frac{\epsilon}{2}$, then one
establishes the invertibility of $\widetilde{J}-\zeta I$ on the basis
of Lemma~\ref{lem:janas-naboko-stolz} by choosing $\gamma$
appropriately. If $\abs{\Im\zeta}> w(\Re\zeta)\frac{\epsilon}{2}$,
then one can rely on the properties of dissipative (or
anti-dissipative) operators to conclude that $\widetilde{J}-\zeta I$ is
invertible for some values of $\gamma$.

First we assume that $\abs{\Im\zeta}\le
w(\Re\zeta)\frac{\epsilon}{2}$.
Observe that we have defined $T$ and $R$  so that
\begin{equation*}
  \widetilde{J}-\zeta I= T+\I R\,.
\end{equation*}
Let us verify that the operators $T$ and $R$ satisfy the conditions of
Lemma~\ref{lem:janas-naboko-stolz}. To this end, we estimate the norms
of the real and imaginary parts of $F$.

It follows from ~\eqref{eq:definition-of-f} that
\begin{align*}
  2\Re F&=F+F^*\\ &=
S\diag\{\Phi_m^{-1}A_m\Phi_{m+1}-2A_m+\Phi_m^*A_m(\Phi_{m+1}^*)^{-1}\}^*\\
 &+\diag\{\Phi_m^{-1}A_m\Phi_{m+1}-2A_m+\Phi_m^*A_m(\Phi_{m+1}^*)^{-1}\}S^*\,.
\end{align*}
Since the matrix of the operator $2\Re F$ has only two block diagonals
not necessarily zero and one diagonal is the adjoint of the other, one
has
\begin{equation}
  \label{eq:estimate-real-F}
\norm{\Re F}_{B(\mathcal{H})}\le\sup_{m\in\nats}\left\{\norm{\Phi_m^{-1}A_m\Phi_{m+1}-2A_m+
\Phi_m^*A_m(\Phi_{m+1}^*)^{-1}}_{B(\mathcal{K})}\right\}\,,
\end{equation}
which in view of \eqref{eq:def-tilde-Phi-m} yields
\begin{equation*}
    \norm{\Re F}_{B(\mathcal{H})}\le
\sup_{m\in\nats}
\left\{\norm{A_m\left(e^{-\gamma\phi_{\delta}(\norm{A_m}_{B(\mathcal{K})})}
-2+e^{\gamma\phi_\delta(\norm{A_m}_{B(\mathcal{K})})}\right)}_{B(\mathcal{K})}\right\}\,.
\end{equation*}
Using the elementary inequality
\begin{equation}
 \label{eq:algebraic-inequality-exp}
  0\le e^x-2+e^{-x}\le x^2e^x\,,
\end{equation}
which holds for any $x\ge 0$, one arrives at
\begin{equation}
  \label{eq:norm-real-part-supremum}
    \norm{\Re F}_{B(\mathcal{H})}\le
\sup_{m\in\nats}
\left\{\norm{A_m}_{B(\mathcal{K})}\gamma^2\phi_\delta^2(\norm{A_m}_{B(\mathcal{K})})
e^{\gamma\phi_\delta(\norm{A_m}_{B(\mathcal{K})})}\right\}\,.
\end{equation}
To obtain an appropriate estimate for $\norm{\Re F}$, first note that, for any given
$\delta,\epsilon>0$, if one defines
\begin{equation}
  \label{eq:gamma-first-definition}
  \gamma:=\delta\psi^{-1}
\left(\frac{w^2(\Re\zeta)\epsilon}{2\delta(s-r)}\right)\,,
\end{equation}
then
\begin{equation}
  \label{eq:inequality-to-hold}
  \xi\psi(\gamma\phi_\delta(\xi))\le
  \frac{w^2(\Re\zeta)\epsilon}{2(s-r)}
\end{equation}
for all $\xi>0$. Indeed, it follows from \eqref{eq:phi-delta} that,
when $0<\xi\le\delta$, the inequality
\eqref{eq:inequality-to-hold} holds whenever
\begin{equation*}
  \psi\left(\frac\gamma\delta\right)\le\frac{w^2(\Re \zeta)\epsilon}{2\delta(s-r)}\,.
\end{equation*}
Hence, \eqref{eq:gamma-first-definition} guarantees
\eqref{eq:inequality-to-hold}. In the case $\xi>\delta$, assuming
\eqref{eq:gamma-first-definition}, rewrite
\eqref{eq:inequality-to-hold} using \eqref{eq:phi-delta} as follows
\begin{equation*}
  \xi\psi\left(\frac\gamma\xi\right)\le\psi\left(\frac\gamma\delta\right)\delta\,.
\end{equation*}
By \eqref{eq:psi-tilde-psi}, this inequality is
equivalent to
\begin{equation*}
  \frac{\gamma^2}\xi e^{\frac\gamma\xi}\le\frac{\gamma^2}\delta e^{\frac\gamma\delta}\,,
\end{equation*}
so, after dividing by $\gamma$, one verifies that it holds since
$\xi>\delta$ and the function $te^t$ is monotone for $t>0$.

Comparing \eqref{eq:norm-real-part-supremum} to
\eqref{eq:inequality-to-hold}, one concludes that
\begin{equation}
 \label{eq:estimate-real-part-finite-interval}
  \norm{\Re F}_{B(\mathcal{H})}\le
\frac{w^2(\Re\zeta)\epsilon}{2(s-r)}\le
\frac{\dist(\Re\zeta,
    \{r,s\})\epsilon}{2}\,.
\end{equation}
Since $Y$ is compact, one can take
$M$ sufficiently large so that, simultaneously,
\begin{equation}
 \label{eq:first-choice-M-finite-interval}
  \norm{\Re Y(I-\widetilde{P}_M)}_{B(\mathcal{H})}\le
(\Re\zeta-r)\frac{\epsilon}{2}\,,\quad
\norm{\Re Y(I-\widetilde{P}_M)}_{B(\mathcal{H})}\le
  (s-\Re\zeta)\frac{\epsilon}{2}\,.
\end{equation}
Thus, taking into account that
\begin{align*}
  d_-&\ge (\Re\zeta-r)-\norm{\Re
    Y(I-\widetilde{P}_M)}_{B(\mathcal{H})}
-\norm{\Re F}_{B(\mathcal{H})}\\
  d_+&\ge (s-\Re\zeta)-\norm{\Re
    Y(I-\widetilde{P}_M)}_{B(\mathcal{H})}
-\norm{\Re F}_{B(\mathcal{H})}\,,
\end{align*}
and combining \eqref{eq:estimate-real-part-finite-interval} and
\eqref{eq:first-choice-M-finite-interval}, one obtains
\begin{equation}
\label{eq:bound-for-d+-}
\begin{split}
  d_-&\ge (\Re\zeta-r)(1-\epsilon)\\
  d_+&\ge (s-\Re\zeta)(1-\epsilon)\,.
\end{split}
\end{equation}

On the other hand, by \eqref{eq:s-definition}, one has
\begin{equation}
\label{eq:first-estimate-for-beta}
  \norm{R}_{B(\mathcal{H})}
\le\norm{\Im Y(I-\widetilde{P}_M)}_{B(\mathcal{H})}
+\norm{\Im F}_{B(\mathcal{H})}+\abs{\Im\zeta}\,.
\end{equation}
Make $M$ perhaps even larger than what was required in
\eqref{eq:first-choice-M-finite-interval} so that
\begin{equation}
 \label{eq:second-choice-M-finite-interval}
  \norm{\Im Y(I-\widetilde{P}_M)}_{B(\mathcal{H})}\le w(\Re\zeta)\frac{\epsilon}{4}.
\end{equation}
To found a bound for the second term in the right-hand side of
\eqref{eq:first-estimate-for-beta}, one uses the argumentation for
obtaining the inequality \eqref{eq:estimate-real-F} from the
expression \eqref{eq:definition-of-f}. Thus,
\begin{equation}
  \label{eq:estimate-imaginary-F}
\norm{\Im F}_{B(\mathcal{H})}\le\sup_{m\in\nats}\left\{\norm{\Phi_m^{-1}A_m\Phi_{m+1}-
\Phi_m^*A_m(\Phi_{m+1}^*)^{-1}}_{B(\mathcal{K})}\right\}\,.
\end{equation}
A straightforward computation yields
\begin{equation}
  \label{eq:F-imaginary-part-first-estimate}
  \norm{\Im F}_{B(\mathcal{H})}\le
\sup_{m\in\nats}
\left\{\norm{A_m\left(e^{-\gamma\phi_{\delta}(\norm{A_m}_{B(\mathcal{K})})}
-e^{\gamma\phi_\delta(\norm{A_m}_{B(\mathcal{K})})}\right)}_{B(\mathcal{K})}\right\}\,,
\end{equation}
from which, by means of the elementary inequality
\begin{equation*}
0\le e^x-e^{-x}\le 2xe^x\,,
\end{equation*}
valid for any positive $x$, one obtains
\begin{equation}
  \label{eq:norm-F-imaginary-part}
  \norm{\Im F}_{B(\mathcal{H})}\le\sup_{m\in\nats}
\left\{2\norm{A_m}_{B(\mathcal{K})}\gamma
\phi_\delta\left(\norm{A_m}_{B(\mathcal{K})}\right)
e^{\gamma\phi_\delta(\norm{A_m}_{B(\mathcal{K})})}\right\}\,.
\end{equation}
Proceeding as before, if one defines
\begin{equation}
  \label{eq:gamma-second-definition}
  \gamma:=\delta\widetilde{\psi}^{-1}
\left(\frac{w(\Re\zeta)(1-2\epsilon)}{2\delta}\right)\,,
\end{equation}
assuming $\delta>0$ and $\epsilon\in(0,1/2)$, then
\begin{equation*}
  2\xi\widetilde{\psi}(\gamma\phi_\delta(\xi))\le
  w(\Re\zeta)(1-2\epsilon)
\end{equation*}
for all $\xi>0$. Comparing this inequality with
\eqref{eq:norm-F-imaginary-part}, one concludes that
\begin{equation}
 \label{eq:estimate-imaginary-part-finite-interval}
  \norm{\Im F}_{B(\mathcal{H})}\le w(\Re\zeta)(1-2\epsilon)\,.
\end{equation}
Taking into account that $\abs{\Im\zeta}\le
w(\Re\zeta)\frac\epsilon2$ and inserting
\eqref{eq:second-choice-M-finite-interval} and
\eqref{eq:estimate-imaginary-part-finite-interval} into
\eqref{eq:first-estimate-for-beta}, one arrives at
\begin{equation}
  \label{eq:final-estimate-beta}
  \norm{R}_{B(\mathcal{H})}\le w(\Re\zeta)(1-\tfrac{5}{4}\epsilon)<
  w(\Re\zeta)(1-\epsilon)\le\sqrt{d_-d_+}\,,
\end{equation}
where the second inequality is obtained from \eqref{eq:bound-for-d+-}.

The estimates \eqref{eq:estimate-real-part-finite-interval} and
\eqref{eq:first-choice-M-finite-interval} show that $T$ is
invertible. Clearly, $V:=R/\norm{R}$ is a self-adjoint contraction. Thus, since according
to \eqref{eq:final-estimate-beta} $\beta:=\norm{R}$ satisfies the conditions of
Lemma~\ref{lem:janas-naboko-stolz}, one concludes that
$\widetilde{J}-\zeta I$ is invertible.

Let us now consider the case
\begin{equation}
  \label{eq:case-imaginary-lambda-big}
\abs{\Im\zeta}>
w(\Re\zeta)\frac\epsilon2\,.
\end{equation}
As before, we first choose $M\in\nats$
sufficiently large so that
\begin{equation}
  \label{eq:estimate-b-pm}
  \norm{Y(I-P_M)}_{B(\mathcal{H})}\le w(\Re\zeta)\frac{\epsilon\eta}{4}\,,\qquad 0<\eta<1\,.
\end{equation}
Now, if $\gamma$ is as in \eqref{eq:gamma-big-imaginary}, then
\begin{equation}
  \label{eq:xi-inequality-imaginary-big}
    2\xi\widetilde{\psi}(\gamma\phi_\delta(\xi))\le
  w(\Re\zeta)\frac\epsilon2(1-\eta)\,.
\end{equation}
This is shown following the reasoning used to established that
\eqref{eq:inequality-to-hold} holds under the assumption
\eqref{eq:gamma-first-definition}. Comparing
\eqref{eq:xi-inequality-imaginary-big} and
\eqref{eq:norm-F-imaginary-part}, one obtains
\begin{equation}
  \label{eq:norm-imaginary-F-big}
   \norm{\Im F}_{B(\mathcal{H})}\le w(\Re\zeta)\frac\epsilon2(1-\eta)\,.
 \end{equation}
Since $\Im (\widetilde{J}-\zeta I)= R$, it follows from \eqref{eq:s-definition},
\eqref{eq:estimate-b-pm} and \eqref{eq:norm-imaginary-F-big} that
\begin{equation*}
  \Im (\widetilde{J}-\zeta I)\le w(\Re\zeta)\frac\epsilon2(1-\eta)
  +w(\Re\zeta)\frac{\epsilon\eta}{4} -\Im\zeta I\,.
\end{equation*}
Thus, if one considers the particular realization of
\eqref{eq:case-imaginary-lambda-big} given by $\Im\zeta>
w(\Re\zeta)\frac\epsilon2$, then
\begin{align*}
  \Im (\widetilde{J}-\zeta I)&< w(\Re\zeta)\frac\epsilon2(1-\eta)
  +w(\Re\zeta)\frac{\epsilon\eta}{4} - w(\Re\zeta)\epsilon/2\\
  &= -w(\Re\zeta)\frac{\epsilon\eta}{4}\,.
\end{align*}
Thus the operator $\widetilde{J}-\zeta I
+iw(\Re\zeta)\frac{\epsilon\eta}{4}$ is anti-dissipative and
therefore $\widetilde{J}-\zeta I$ is invertible. The case $\Im\zeta<-
w(\Re\zeta)\frac\epsilon2$ is treated analogously.
\end{proof}
\begin{remark}
 \label{rem:delta-parameter}
  In the estimate given by Theorem~\ref{thm:finite-interval}, there
  are no constraints in the choice of the constant $\delta>0$. The
  function $\gamma(\zeta)$ yields \emph{a priori} better estimates when
  $\delta\to\infty$. Note, however, that one does not obtain a more
  accurate estimate by letting $\delta\to\infty$ due to the dependence
  on $\delta$ of the function $\phi_\delta$.
\end{remark}
\begin{remark}
  \label{rem:infinite-fnite-gap}
  The hypothesis of Theorem~\ref{thm:finite-interval} is fulfilled by
  a bounded from below self-adjoint operator $J$ with bounded from
  below essential spectrum. For these operators
  \cite[Thm.\,3.1]{MR3820400} provides an estimate for the decay of
  Green matrix entries which is more precise than the one given by
  Theorem~\ref{thm:finite-interval}. Note that that boundedness from
  below is not required by Theorem~\ref{thm:finite-interval}.  In
  \cite[Thm.\,3.1]{MR3820400}, this requirement was crucial for the
  proof. As shown in \cite{MR2579689}, the asymptotic behaviour of
  generalized eigenvectors of a non-bounded from below Jacobi operator
  can depend on the main diagonal in contrast to
  \cite[Thm.\,3.1]{MR3820400} and Theorem~\ref{thm:finite-interval}.
  The results of \cite{MR2579689} illustrate why the semiboundedness
  is essential in the case of \cite[Thm.\,3.1]{MR3820400} and does not
  contradicts Theorem~\ref{thm:finite-interval} due to the different
  order of estimates.
\end{remark}

For the next theorem, we rely on
\cite[Lem.\,3.1]{MR3820400}. Here we reproduce the statement of the lemma for
easy reference. Recall that $P_1$ is given in Section~\ref{sec:notation}.

\begin{lemma}
  \label{lem:perturbed-not-in-spectrum}
  Let $J$ be the operator given in Definition~\ref{def:j-nought} and
  $L$ be a compact operator in $\mathcal{K}$ with trivial kernel such that
  $\norm{L}_{B(\mathcal{K})}=1$. If $A_m$ has trivial kernel for all
  $m\in\nats$ and $\zeta$ is in the discrete spectrum of $J$, then,
  for any  $\tau>0$ sufficiently small, $\zeta$ is not in the spectrum of
  \begin{equation}
    \label{eq:t-epsilon}
    J(\tau):=J+\tau P_1L^*LP_1\,.
  \end{equation}
\end{lemma}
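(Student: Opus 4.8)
The plan is to argue by contradiction, exploiting that the perturbation $W:=P_1L^*LP_1$ is a positive compact operator which acts nontrivially on every eigenvector of $J$ at $\zeta$. Since $L$ is compact, so is $L^*L$, and hence $\tau W$ is a self-adjoint compact perturbation of $J$; by Weyl's theorem $\sigma_{ess}(J(\tau))=\sigma_{ess}(J)$, so $\zeta\notin\sigma_{ess}(J(\tau))$ for every $\tau$. Consequently, if $\zeta$ belonged to $\sigma(J(\tau))$ it would have to be an isolated eigenvalue, and it would suffice to rule out the existence of a corresponding eigenvector.

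The crucial step is to show that $W$ is strictly positive on the eigenspace $\mathcal{E}:=\ker(J-\zeta I)$, which is finite dimensional because $\zeta$ lies in the discrete spectrum. First I would prove that every eigenvector $u=\{u_m\}_{m=1}^\infty\in\mathcal{E}$ has $u_1=P_1u\neq0$. Indeed, if $u_1=0$, then the difference expression \eqref{eq:difference-initial} forces $A_1u_2=0$, and the injectivity of $A_1$ gives $u_2=0$; feeding this into \eqref{eq:difference-recurrence} successively yields $A_ku_{k+1}=0$ and hence $u_{k+1}=0$ for all $k$, so $u=0$, a contradiction. Since $L$ has trivial kernel, $Lu_1\neq0$, and therefore $\inner{Wu}{u}=\norm{Lu_1}_{\mathcal{K}}^2>0$ for every nonzero $u\in\mathcal{E}$. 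As $\mathcal{E}$ is finite dimensional, there is a constant $c>0$ with $\inner{Wu}{u}\ge c\norm{u}^2$ for all $u\in\mathcal{E}$.

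It remains to derive a contradiction from the assumption that, for small $\tau>0$, there is a normalized $v\in\dom(J)$ with $(J+\tau W)v=\zeta v$, i.e. $(J-\zeta I)v=-\tau Wv$. Let $P:=E(\{\zeta\})$ be the orthogonal projector onto $\mathcal{E}$ and write $v=v'+v''$ with $v'=Pv$ and $v''=(I-P)v$. Because $\zeta$ is isolated, there is $\rho>0$ with $\dist(\zeta,\sigma(J)\setminus\{\zeta\})\ge\rho$, so $J-\zeta I$ is boundedly invertible on $\ran(I-P)$ with inverse of norm at most $1/\rho$; applying $(I-P)$ to $(J-\zeta I)v=-\tau Wv$ and using $\norm{W}_{B(\mathcal{H})}\le1$ gives $\norm{v''}\le\tau/\rho$. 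Pairing $(J-\zeta I)v=-\tau Wv$ with $v'\in\mathcal{E}=\ker(J-\zeta I)$ and using self-adjointness of $J-\zeta I$ makes the left-hand side vanish, whence $\inner{Wv}{v'}=0$ and thus $\inner{Wv'}{v'}=-\inner{Wv''}{v'}$. The right-hand side is bounded in modulus by $\norm{W}_{B(\mathcal{H})}\norm{v''}\norm{v'}\le\tau/\rho$, while the positivity established above gives $\inner{Wv'}{v'}\ge c\norm{v'}^2=c(1-\norm{v''}^2)\ge c(1-(\tau/\rho)^2)$. Hence $c(1-(\tau/\rho)^2)\le\tau/\rho$, which fails once $\tau$ is sufficiently small; this contradiction shows $\zeta\notin\sigma(J(\tau))$ for all sufficiently small $\tau>0$.

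I expect the main obstacle to be the positivity step, and specifically the unique-continuation argument guaranteeing $P_1u\neq0$ for eigenvectors: this is exactly where the hypotheses that each $A_m$ and $L$ have trivial kernel enter, and it is what makes the localized perturbation at the first block effective. The surrounding perturbation-theoretic estimates are quantitative but routine, relying only on the isolation of $\zeta$ and on the bound $\norm{W}_{B(\mathcal{H})}\le1$ coming from $\norm{L}_{B(\mathcal{K})}=1$.
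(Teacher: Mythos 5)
Your proof is correct. Every step checks out: the unique\-/continuation argument (if $u_1=0$, then \eqref{eq:difference-initial} and $\ker A_1=\{0\}$ force $u_2=0$, and \eqref{eq:difference-recurrence} propagates triviality), the resulting uniform positivity $\inner{Wu}{u}\ge c\norm{u}^2$ on the finite-dimensional eigenspace $\mathcal{E}=\ker(J-\zeta I)$, the Weyl stability $\sigma_{ess}(J(\tau))=\sigma_{ess}(J)$ (valid since $\tau P_1L^*LP_1$ is compact and self-adjoint, even for unbounded $J$), the bound $\norm{v''}\le\tau/\rho$ via the reduced resolvent on $\ran(I-P)$, and the final contradiction $c(1-(\tau/\rho)^2)\le\tau/\rho$ for small $\tau$. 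Each hypothesis is used exactly where it must be: compactness of $L$ for Weyl's theorem, triviality of $\ker A_m$ and $\ker L$ for positivity on $\mathcal{E}$, and $\norm{L}_{B(\mathcal{K})}=1$ only for the normalization $\norm{W}_{B(\mathcal{H})}\le 1$.

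Be aware, however, that this paper contains no proof of Lemma~\ref{lem:perturbed-not-in-spectrum} to compare against: the statement is reproduced verbatim from \cite[Lem.\,3.1]{MR3820400}, and the reader is referred there. The route natural to that setting (and standard for statements of this kind) is perturbation theory for the isolated eigenvalue group: since $\zeta$ is an isolated eigenvalue of finite multiplicity and the perturbation is bounded, non-negative and compact, the spectrum of $J(\tau)$ near $\zeta$ consists of finitely many eigenvalue branches whose first-order shifts are the eigenvalues of the compression of $W=P_1L^*LP_1$ to $\mathcal{E}$; the same positivity you establish makes these shifts strictly positive, so the branches move strictly above $\zeta$ while the rest of the spectrum stays at distance at least $\rho-\tau$. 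Your argument rests on exactly the same two pillars --- unique continuation and isolation of $\zeta$ --- but replaces the analytic perturbation machinery by a direct variational contradiction. What your route buys is self-containedness (nothing beyond the spectral theorem and Weyl's theorem) and an explicit admissible threshold for $\tau$ in terms of $c$ and $\rho=\dist(\zeta,\sigma(J)\setminus\{\zeta\})$; what the perturbative route buys is finer information, namely the location of the displaced eigenvalues (strictly above $\zeta$), rather than mere avoidance of the point $\zeta$ itself.
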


\begin{theorem}
  \label{thm:finite-interval-in-spectrum}
  Assume that Hypothesis~\ref{hyp:main} holds true and that
  $\ker(A_m)=\{0\}$ for any $m\in\nats$. Fix an arbitrary $\delta>0$
  and $\epsilon\in(0,1/2)$.
 If $\zeta\in\sigma_p(J)\cap (r,s)$ and $u$ is the corresponding
  eigenvector, then
  \begin{equation*}
    \norm{u_m}_{\mathcal{K}}\le
C\exp(-\gamma(\zeta)\nsum_{k=1}^{m-1}
\phi_\delta\left(\norm{A_k}_{B(\mathcal{K})}\right))\,,
  \end{equation*}
where $\gamma$ and $\phi_{\delta}$ are given by
\eqref{eq:gamma-noncommut-small-imaginary} and \eqref{eq:phi-delta}, respectively.
The  constant $C$ does not depend on $m$.
\end{theorem}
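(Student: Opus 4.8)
The plan is to remove $\zeta$ from the spectrum by a perturbation supported on the first block, apply Theorem~\ref{thm:finite-interval} to the perturbed operator, and then recover the eigenvector from the perturbed resolvent. Since $\zeta$ is an eigenvalue, Theorem~\ref{thm:finite-interval} cannot be invoked for $J$ directly, but Lemma~\ref{lem:perturbed-not-in-spectrum} supplies exactly the device needed. I would fix a compact operator $L$ in $\mathcal{K}$ with trivial kernel and $\norm{L}_{B(\mathcal{K})}=1$, and consider the operator $J(\tau)$ of \eqref{eq:t-epsilon}. Because $(r,s)\cap\sigma_{ess}(J)=\emptyset$, the eigenvalue $\zeta$ is isolated and thus lies in the discrete spectrum, so Lemma~\ref{lem:perturbed-not-in-spectrum} yields a $\tau>0$ small enough that $\zeta\notin\sigma(J(\tau))$; I would fix such a $\tau$ once and for all.

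The next step is to verify that $J(\tau)$ satisfies the hypotheses of Theorem~\ref{thm:finite-interval}. The perturbation $\tau P_1L^*LP_1=\tau(LP_1)^*(LP_1)$ is bounded and self-adjoint, so $J(\tau)=J(\tau)^*$ on $\dom(J)$; being compact, it leaves the essential spectrum invariant by Weyl's theorem, so $\sigma_{ess}(J(\tau))=\sigma_{ess}(J)$ and Hypothesis~\ref{hyp:main} persists for $J(\tau)$ with the same interval $(r,s)$. The decisive observation is that $J(\tau)$ has the \emph{same} off-diagonal entries $A_m$ as $J$, so the factors $\phi_\delta(\norm{A_k}_{B(\mathcal{K})})$ and the coefficient $\gamma(\zeta)$ delivered by Theorem~\ref{thm:finite-interval} coincide with those in the statement to be proved. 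Finally, since $J=J^*$ forces $\zeta\in\reals$, one has $\Im\zeta=0$, which places us in the regime $\abs{\Im\zeta}\le w(\Re\zeta)\tfrac{\epsilon}{2}$, so $\gamma(\zeta)$ is the one given by \eqref{eq:gamma-noncommut-small-imaginary}, exactly as required.

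The core computation would link $u$ to the Green matrix of $J(\tau)$. From $Ju=\zeta u$ I would obtain $(J(\tau)-\zeta I)u=\tau P_1L^*LP_1u$, a vector supported on the first block and equal to the embedding of $\tau L^*Lu_1\in\mathcal{K}$. Inverting (licit since $\zeta\notin\sigma(J(\tau))$) and projecting onto the $m$-th block gives $u_m=\tau\,G^{\tau}_{m1}(\zeta)\,L^*Lu_1$, where $G^{\tau}_{m1}(\zeta):=P_m(J(\tau)-\zeta I)^{-1}P_1$. Taking $\mathcal{K}$-norms and using $\norm{L^*L}_{B(\mathcal{K})}=\norm{L}_{B(\mathcal{K})}^2=1$ would yield
\begin{equation*}
  \norm{u_m}_{\mathcal{K}}\le\tau\,\norm{G^{\tau}_{m1}(\zeta)}_{B(\mathcal{K})}\,\norm{u_1}_{\mathcal{K}}\,.
\end{equation*}
Applying Theorem~\ref{thm:finite-interval} to $J(\tau)$ with $j=1\le m$, so that $\min(m,1)=1$ and the sum runs from $k=1$ to $m-1$, bounds $\norm{G^{\tau}_{m1}(\zeta)}_{B(\mathcal{K})}$ by $C'\exp(-\gamma(\zeta)\nsum_{k=1}^{m-1}\phi_\delta(\norm{A_k}_{B(\mathcal{K})}))$ with $C'$ independent of $m$, and setting $C:=\tau C'\norm{u_1}_{\mathcal{K}}$ completes the argument.

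I expect the main obstacle to be conceptual rather than computational: choosing a perturbation that simultaneously ejects the embedded eigenvalue from the spectrum, so that a resolvent becomes available, and leaves every $A_m$ untouched, so that the decay rate of Theorem~\ref{thm:finite-interval} transfers verbatim. The trivial-kernel assumptions on $L$ and on the $A_m$ are precisely what make the first point work through Lemma~\ref{lem:perturbed-not-in-spectrum}; in particular the injectivity of $L^*L$ ensures that $\tau L^*Lu_1\neq 0$ (equivalently $u_1\neq 0$), which is the mechanism forcing $\zeta$ out of $\sigma(J(\tau))$. The remaining points---compactness of the perturbation for the invariance of $\sigma_{ess}$, and the verification that $C$ is genuinely independent of $m$ once $\tau$ is fixed---are routine.
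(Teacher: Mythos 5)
Your proposal is correct and follows essentially the same route as the paper's own proof: perturb by $\tau P_1L^*LP_1$ via Lemma~\ref{lem:perturbed-not-in-spectrum}, note that the essential spectrum (hence Hypothesis~\ref{hyp:main}) and the off-diagonal entries $A_m$ are unchanged, derive $u=\tau(J(\tau)-\zeta I)^{-1}P_1L^*LP_1u$, and apply Theorem~\ref{thm:finite-interval} to $J(\tau)$ with $j=1$. Your explicit remarks that $\zeta$ lies in the \emph{discrete} spectrum (so the lemma applies) and that $\Im\zeta=0$ places one in the regime of \eqref{eq:gamma-noncommut-small-imaginary} are points the paper leaves implicit, and they are handled correctly.
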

\begin{proof}
  By Lemma~\ref{lem:perturbed-not-in-spectrum}, one can choose
  $\tau>0$ such that $\zeta\not\in\sigma(J(\tau))$. Additionally,
  according to perturbation theory,
  $\sigma_{ess}(J)=\sigma_{ess}(J(\tau))$ (see \cite[Chap.\,4,
  Thm.\,5.35 and Chap.\,5, Thm.\,4.11]{MR0407617}). Therefore $J(\tau)$
  satisfies the hypothesis of Theorem~\ref{thm:finite-interval}.

  Now, if $u$ is a nonzero vector in $\ker(J-\zeta I)$, then
  \begin{equation*}
    (J(\tau)-\zeta I)u=\tau P_1L^*LP_1u\,.
  \end{equation*}
Therefore
\begin{equation}
\label{eq:eigenvector-throu-itself}
   u=\tau(J(\tau)-\zeta I)^{-1}P_1L^*LP_1u\,,
\end{equation}
which in turn implies
\begin{align*}
  \norm{u_m}_{\mathcal{K}}&=\norm{P_mu}\\ &=\norm{P_m\tau(J(\tau)-\zeta
                             I)^{-1}P_1L^*LP_1u}\\
&\le\tau\norm{P_m(J(\tau)-\zeta I)^{-1}P_1}_{B(\mathcal{H})}
\norm{P_1u}\\
&\le \widetilde{C}\exp(-\gamma(\zeta)\msum_{k=1}^{m-1}
\phi_\delta\left(\norm{A_k}_{B(\mathcal{K})}\right))\norm{P_1u}\,,
\end{align*}
where in the first inequality we use that
$\norm{L}_{B(\mathcal{K})}=1$ and in the second we resort to
Theorem~\ref{thm:finite-interval}.
\end{proof}

\begin{corollary}
  \label{cor:norm-to-infty-finite}
  Assume that Hypothesis~\ref{hyp:main} holds and that
  $\norm{A_k}_{B(\mathcal{K})}\!\convergesto{k}\infty$.
  Fix $\epsilon\in(0,1/2)$.
\begin{enumerate}[(a)]
\item \label{not-in-spectrum-cor-finite}
If $\zeta\not\in\sigma(J)$ and $\Re\zeta\in(r,s)$, then
  \begin{equation*}
       \norm{G_{mj}(\zeta)}_{B(\mathcal{K})}\le
    C_a\exp(-\widetilde{\gamma}(\zeta)\mkern-12mu
  \nsum_{k=\min(m,j)}^{\max(m,j)-1}\mkern-12mu\norm{A_k}_{B(\mathcal{K})}^{-1})\,,
  \end{equation*}
where
\begin{equation}
\label{eq:gamma-tilde}
  \widetilde{\gamma}(\zeta)<
  \begin{cases}
    w(\Re\zeta)\left(\frac12-\epsilon\right) & \text{ if }\mkern6mu
 \abs{\Im(\zeta)}\le
w(\Re\zeta)\frac{\epsilon}{2}\\[3mm]
\frac{w(\Re\zeta)}{4}\epsilon & \text{ otherwise}.
  \end{cases}
\end{equation}
\item \label{in-spectrum-cor-finite} If $\ker(A_n)$ is trivial for all
$n\in\nats$, $\zeta\in\sigma_p(J)\cap (r,s)$ and $u$ is the corresponding
eigenvector,  then
  \begin{equation*}
    \norm{u_m}_{\mathcal{K}}\le
   C_b\exp(-w(\zeta)\left(\frac12-\epsilon\right)\nsum_{k=1}^{m-1}
\norm{A_k}_{B(\mathcal{K})}^{-1})\,.
  \end{equation*}
  \end{enumerate}
The constant $C_a$ does not
depend on $m$ and $j$, and $C_b$ does not depend on $m$.
\end{corollary}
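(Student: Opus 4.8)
The plan is to obtain both bounds from Theorems~\ref{thm:finite-interval} and~\ref{thm:finite-interval-in-spectrum} by letting the free parameter $\delta$ tend to infinity (and, in the second regime, $\eta$ to zero), and by using the hypothesis $\norm{A_k}_{B(\mathcal{K})}\to\infty$ to trade $\phi_\delta(\norm{A_k}_{B(\mathcal{K})})$ for $\norm{A_k}_{B(\mathcal{K})}^{-1}$ at the cost of a bounded, $(m,j)$-independent error.

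Two preliminary observations drive the argument. First, since $\norm{A_k}_{B(\mathcal{K})}\to\infty$, for each fixed $\delta>0$ there are only finitely many indices with $\norm{A_k}_{B(\mathcal{K})}<\delta$; as $\phi_\delta(x)\le 1/x$ by~\eqref{eq:phi-delta}, with equality once $x\ge\delta$, one gets
\[
\sum_{k=\min(m,j)}^{\max(m,j)-1}\phi_\delta\!\left(\norm{A_k}_{B(\mathcal{K})}\right)\ge\sum_{k=\min(m,j)}^{\max(m,j)-1}\norm{A_k}_{B(\mathcal{K})}^{-1}-D_\delta ,
\]
where $D_\delta:=\sum_{k:\,\norm{A_k}_{B(\mathcal{K})}<\delta}\left(\norm{A_k}_{B(\mathcal{K})}^{-1}-\delta^{-1}\right)$ is a finite constant independent of $m$ and $j$. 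Second, I would examine the $\delta\to\infty$ behaviour of $\gamma$: from~\eqref{eq:psi-tilde-psi} one has $\psi^{-1}(y)\sim\sqrt{y}$ and $\widetilde{\psi}^{-1}(y)\sim y$ as $y\to0^{+}$, while $\widetilde{\psi}^{-1}(y)\le y$ throughout. Hence the first entry of the minimum in~\eqref{eq:gamma-noncommut-small-imaginary} grows like $\sqrt\delta$, whereas the second increases to $w(\Re\zeta)\left(\tfrac12-\epsilon\right)$ from below; for $\delta$ large the minimum equals the second term, so $\gamma(\zeta)$ increases to $w(\Re\zeta)\left(\tfrac12-\epsilon\right)$. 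Similarly the quantity in~\eqref{eq:gamma-big-imaginary} increases to $\tfrac{w(\Re\zeta)\epsilon}{4}(1-\eta)$, which tends to $\tfrac{w(\Re\zeta)\epsilon}{4}$ as $\eta\downarrow0$.

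Part~(a) then follows by a suitable choice of parameters. Given the target $\widetilde\gamma(\zeta)$, which by~\eqref{eq:gamma-tilde} lies strictly below the corresponding limit, I would first fix $\eta$ small (in the regime $\abs{\Im\zeta}>w(\Re\zeta)\tfrac\epsilon2$) and then take $\delta$ large enough that the value $\gamma(\zeta)$ furnished by Theorem~\ref{thm:finite-interval} satisfies $\gamma(\zeta)>\widetilde\gamma(\zeta)$. Combining that theorem with the displayed lower bound on the sum of $\phi_\delta$ yields
\[
\norm{G_{mj}(\zeta)}_{B(\mathcal{K})}\le C\,e^{\gamma(\zeta)D_\delta}\exp\!\left(-\gamma(\zeta)\mkern-6mu\sum_{k=\min(m,j)}^{\max(m,j)-1}\mkern-6mu\norm{A_k}_{B(\mathcal{K})}^{-1}\right),
\]
and since $\gamma(\zeta)\ge\widetilde\gamma(\zeta)$ while the sum is nonnegative, the right-hand side is at most $C_a\exp(-\widetilde\gamma(\zeta)\sum\norm{A_k}_{B(\mathcal{K})}^{-1})$ with $C_a:=C\,e^{\gamma(\zeta)D_\delta}$ independent of $m,j$. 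Part~(b) is the same argument applied to the eigenvector estimate of Theorem~\ref{thm:finite-interval-in-spectrum}, whose hypotheses hold because $\ker(A_n)=\{0\}$ and $\zeta\in\sigma_p(J)\cap(r,s)$ is real, so that $\Im\zeta=0$ puts us in the first regime and $w(\zeta)=w(\Re\zeta)$.

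The one genuine subtlety, and the step I expect to demand the most care, is that the second entry of~\eqref{eq:gamma-noncommut-small-imaginary} approaches $w(\Re\zeta)\left(\tfrac12-\epsilon\right)$ strictly from below, so no finite $\delta$ makes $\gamma(\zeta)$ reach $w(\zeta)\left(\tfrac12-\epsilon\right)$ with the same $\epsilon$. To produce the exact constant $w(\zeta)\left(\tfrac12-\epsilon\right)$ in part~(b) I would therefore apply Theorem~\ref{thm:finite-interval-in-spectrum} with a strictly smaller parameter $\epsilon'<\epsilon$: its limit $w(\zeta)\left(\tfrac12-\epsilon'\right)$ strictly exceeds $w(\zeta)\left(\tfrac12-\epsilon\right)$, so for $\delta$ large one has $\gamma(\zeta)>w(\zeta)\left(\tfrac12-\epsilon\right)$ and the estimate closes exactly as above. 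In part~(a) this issue does not arise, since $\widetilde\gamma$ is only required to be strictly below the limit.
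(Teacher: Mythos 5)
Your proposal is correct and follows essentially the same route as the paper's proof: both pass to large $\delta$ so that the $\widetilde{\psi}^{-1}$ term dominates the minimum in \eqref{eq:gamma-noncommut-small-imaginary} and tends to $w(\Re\zeta)\left(\tfrac12-\epsilon\right)$ from below, and both absorb the finitely many indices with $\norm{A_k}_{B(\mathcal{K})}<\delta$ into an $(m,j)$-independent constant. Your explicit treatment of the exact constant in part (b) via an auxiliary $\epsilon'<\epsilon$ is, if anything, slightly more careful than the paper, which leaves that point implicit.
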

\begin{remark}
  \label{rem:results-corollary}
  Perhaps the most relevant case in various theoretical applications
  corresponds to $\zeta$ being actually in the gap of the essential
  spectrum in item (a). This case admits further simplification. Indeed,
  it follows from \eqref{not-in-spectrum-cor-finite} that if
  $\zeta\in(r,s)$, then, for arbitrarily small $\epsilon'\in(0,1/2)$,
  one has
  \begin{equation*}
  \norm{G_{mj}(\zeta)}_{B(\mathcal{K})}\le
    C_a\exp(-w(\zeta)\left(\frac12-\epsilon'\right)\mkern-12mu
  \nsum_{k=\min(m,j)}^{\max(m,j)-1}\mkern-12mu\norm{A_k}_{B(\mathcal{K})}^{-1})\,.
  \end{equation*}
\end{remark}
\begin{proof}
  First note that the expressions inside the minimum in
  \eqref{eq:gamma-noncommut-small-imaginary} monotonically grow as
  $\delta\to\infty$. Since the minimum has to be taken, one should
  consider only the expression  that grows slower, namely,
  \begin{equation*}
    \delta\widetilde{\psi}^{-1}
\left(\frac{w(\Re\zeta)(1-2\epsilon)}{2\delta}\right)\,.
  \end{equation*}
By choosing $\delta$ appropriately (essentially sufficiently large),
  one obtains
  \begin{equation*}
    \frac{w(\Re\zeta)(1-2\epsilon)}{2\delta}<\epsilon_1\ll 1\,.
  \end{equation*}
  Given $\epsilon$ and $\epsilon_1$, the choice of $\delta$ depends on
  $\zeta$.

If $0<t<\epsilon_1$, then
\begin{equation}
 \label{eq:correct-inequality-psi-}
  \psi^{-1}(t)\ge t(1-\eta)\,,
\end{equation}
where $\eta$ is arbitrarily small whenever $\epsilon_1$
is sufficiently small. Indeed, it follows from
\eqref{eq:psi-tilde-psi} that if $t=y\exp(y)$, then
\begin{equation*}
  \widetilde{\psi}^{-1}(t)=t\exp(-y)=
  t\exp(-t\exp(-t))
\ge t\exp(-t)\,.
\end{equation*}
Thus, the assumption $t\le\epsilon_1$ implies
\begin{equation*}
  \widetilde{\psi}^{-1}(t)\ge t\exp(\epsilon_1)(1-\eta)\,.
\end{equation*}
In view of
\eqref{eq:correct-inequality-psi-}, the choice of $\gamma$
in \eqref{eq:gamma-noncommut-small-imaginary} can be replaced by
\begin{equation*}
  \gamma=\delta\frac{w(\Re\zeta)(1-2\epsilon)}{2\delta}(1-\eta)\,.
\end{equation*}
Finally, observe that, for any $p\in\nats$,
\begin{equation}
\label{eq:argumentation-cor}
\begin{split}
  \msum_{k=1}^p\phi_\delta\left(\norm{A_k}_{B(\mathcal{K})}\right)&=
 \msum_{\substack{\left\| A_k\right\|\le\delta\\ k
\le p}}\frac{1}{\delta}+
\msum_{\substack{\left\| A_k\right\|>\delta\\ k
\le p}}\frac{1}{\norm{A_k}_{B(\mathcal{K})}}\\
&=\msum_{k=1}^p \frac{1}{\norm{A_k}_{B(\mathcal{K})}} +
 \msum_{\substack{\left\| A_k\right\|\le\delta\\ k
\le p}}\frac{1}{\delta}-
\msum_{\substack{\left\| A_k\right\|\le\delta\\ k
\le p}}\frac{1}{\norm{A_k}_{B(\mathcal{K})}}\,,
\end{split}
\end{equation}
where the second and third terms can be absorbed into a constant
which does not depend on $p\gg 1$ since $\norm{A_k}_{B(\mathcal{K})}\convergesto{k}\infty$.
\end{proof}

\section{Estimates in the case of commuting entries}
\label{sec:green-funct-estim}
The results of the previous section admit a refinement when  the
hypotheses of Theorems~\ref{thm:finite-interval} and
\ref{thm:finite-interval-in-spectrum} are complemented with the
requirement that $A_m,B_m,A_m^*$ commute for
$m\in\nats$. This refinement permits to have different bound along
different vectors in the space $\mathcal{K}$.

\begin{hypothesis}
 \label{hyp:commute}
  The system of operators
      $\{A_m,B_m,A^*_m\}_{m\in\nats}$
    given in section \ref{sec:block-jacobi-matr} commutes pairwise.
\end{hypothesis}

\begin{theorem}
  \label{thm:finite-interval-commutation}
Assume that Hypotheses~\ref{hyp:main} and \ref{hyp:commute} hold true.
 Fix an arbitrary $\delta>0$,
  $\epsilon\in(0,1/2)$, and $\eta\in(0,1)$. If $\zeta\not\in\sigma(J)$
  and $\Re\zeta\in(r,s)$, then
    \begin{equation*}
 \norm{\exp(\gamma(\zeta)\mkern-12mu\nsum\limits_{k=\min(m,j)}^{\max(m,j)-1}
\mkern-12mu\phi_\delta(\abs{A_k}))G_{mj}(\zeta)}_{B(\mathcal{K})}
\le C\,,
\end{equation*}
where $\gamma$ is given by \eqref{eq:gamma-noncommut-small-imaginary}
if $\abs{\Im(\zeta)}\le
w(\Re\zeta)\frac{\epsilon}{2}$, and by
\eqref{eq:gamma-big-imaginary} otherwise.
The constant $C$ does not depend on $m$ and $j$.
\end{theorem}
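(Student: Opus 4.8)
The plan is to rerun the proof of Theorem~\ref{thm:finite-interval} essentially verbatim, replacing the \emph{scalar} weights by \emph{operator} weights built from the functional calculus of the moduli $\abs{A_k}$. For a fixed column index $q$ and a cutoff $N$, I would set $\Phi_m:=I$ for $m\le q$, $\Phi_m:=\exp(-\gamma\sum_{k=q}^{m-1}\phi_\delta(\abs{A_k}))$ for $q<m\le N$, and $\Phi_m:=\Phi_N$ for $m>N$, and then $\Phi:=\diag\{\Phi_m\}$. Hypothesis~\ref{hyp:commute} is exactly what makes this legitimate: since $\{A_k,B_k,A_k^*\}$ commute pairwise, each $\abs{A_k}=(A_k^*A_k)^{1/2}$ lies in the commutative algebra they generate, so every $\phi_\delta(\abs{A_k})$ is well defined, commutes with the others and with all $A_m,B_m$, and $\Phi_m$ is a self-adjoint contraction with bounded inverse. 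With $F:=\Phi^{-1}J\Phi-J$ as in \eqref{eq:relation-tilde-f-phi}, commutation lets me pull each $A_m$ through the exponentials in \eqref{eq:definition-of-f}, so that the nonzero block-diagonals of $F$ collapse to $A_m(e^{\mp\gamma\phi_\delta(\abs{A_m})}-I)$, the exact operator analogue of the scalar coefficients in Theorem~\ref{thm:finite-interval}.

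Next I would estimate $\norm{\Re F}$ and $\norm{\Im F}$. The algebraic identities for $F+F^*$ and $F-F^*$ go through unchanged, and commutation again reduces the diagonal entries to $A_m(e^{\gamma\phi_\delta(\abs{A_m})}-2I+e^{-\gamma\phi_\delta(\abs{A_m})})$ and $A_m(e^{\gamma\phi_\delta(\abs{A_m})}-e^{-\gamma\phi_\delta(\abs{A_m})})$. The crucial observation is that the scalar inequalities \eqref{eq:algebraic-inequality-exp} and $0\le e^x-e^{-x}\le 2xe^x$ can now be fed through the functional calculus applied to the positive operator $\gamma\phi_\delta(\abs{A_m})$, and that, since $A_m$ is normal (again by Hypothesis~\ref{hyp:commute}), its polar decomposition $A_m=V_m\abs{A_m}$ has $V_m$ commuting with $\abs{A_m}$, whence $\norm{A_m g(\abs{A_m})}\le\norm{\abs{A_m}g(\abs{A_m})}=\sup_{\lambda\in\sigma(\abs{A_m})}\lambda\abs{g(\lambda)}\le\sup_{\xi>0}\xi\abs{g(\xi)}$. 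Thus $\norm{\Re F}$ and $\norm{\Im F}$ are dominated by the \emph{same} scalar suprema already bounded in \eqref{eq:inequality-to-hold} and \eqref{eq:norm-F-imaginary-part}, so that with $\gamma$ chosen as in \eqref{eq:gamma-noncommut-small-imaginary}--\eqref{eq:gamma-big-imaginary} the bounds on $\norm{\Re F},\norm{\Im F}$ are literally those of Theorem~\ref{thm:finite-interval}.

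Since $X$ and $Y$ depend only on $J$ and are untouched, the continuous invertibility of $\widetilde J-\zeta I$ follows exactly as before, by Lemma~\ref{lem:janas-naboko-stolz} when $\abs{\Im\zeta}\le w(\Re\zeta)\epsilon/2$ and by the (anti-)dissipativity argument otherwise, with $\norm{(\widetilde J-\zeta I)^{-1}}$ bounded by a constant depending only on the gap and on the (uniform) $F$-bounds, hence independent of $q$ and $N$. Reading off the $(m,q)$ block of $\Phi^{-1}(J-\zeta I)^{-1}\Phi$ and letting $N\to\infty$, exactly as in \eqref{eq:phi-N-resolvent}--\eqref{eq:bounded-resolvent-indep-N}, then gives for every $m\ge q$ the one-sided estimate $\norm{\exp(\gamma(\zeta)\sum_{k=q}^{m-1}\phi_\delta(\abs{A_k}))G_{mq}(\zeta)}_{B(\mathcal{K})}\le C$: the factor $\Phi_q=I$ kills the right-hand weight, while $\Phi_m^{-1}$ produces precisely the advertised left weight. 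The whole point of basing $\Phi$ at the column $q$ rather than at $1$ is uniformity of $C$: the correction term $Y\widetilde P_M\Phi^{-1}$ only sees indices $\le M$, where $\Phi^{-1}$ is the identity up to $q$ and is bounded by $e^{\gamma M/\delta}$ beyond it, independently of $q$.

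The hard part will be the opposite regime $m<q$, i.e.\ producing the weight on the \emph{lower}-index side of $G_{mq}$. A contraction weight of the above type inevitably places its growing factor $\Phi^{-1}$ on the higher-index side, so this construction delivers the left weight only for $m\ge q$; and whereas in the scalar proof the case $m<q$ is dispatched by \eqref{eq:norm-resolvent-adjoint}, that trick does \emph{not} transfer literally, because in the present operator-valued setting the adjoint relation relocates the weight from the left of $G_{mq}(\zeta)$ to the \emph{right} of $G_{qm}(\cc\zeta)$ rather than leaving a symmetric scalar bound. To recover the stated left-weighted form for $m<q$ I would rerun the scheme with the reflected, increasing weight $\Psi_l:=\exp(-\gamma\sum_{k=l}^{q-1}\phi_\delta(\abs{A_k}))$ for $l\le q$ and $\Psi_l:=I$ for $l\ge q$, which is again a boundedly invertible contraction and, as the $F+F^*$ and $F-F^*$ computations show, yields an $F$ with identical norm bounds (the sign flip in the exponents being immaterial). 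The genuine obstacle is then uniformity of the constant: for this reflected weight $\Psi^{-1}$ is large precisely on the small indices, where it collides with the non-small part of the compact correction $Y\widetilde P_M$ on a set of indices that grows with $q$; overcoming this, presumably by exploiting compactness and self-adjointness of $Y$ to detach the correction from the region where $\Psi^{-1}$ is large, is the step I expect to require the most care.
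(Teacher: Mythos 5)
Your construction is, at its core, the paper's own proof. The paper likewise replaces the scalar weights of Theorem~\ref{thm:finite-interval} by operator weights $\Phi_m=\exp\left(-\gamma\sum_{k=1}^{m-1}\phi_\delta(\abs{A_k})\right)$ (see \eqref{eq:phi-m-new-def}), uses Hypothesis~\ref{hyp:commute} to pull each $A_m$ through the exponentials in \eqref{eq:definition-of-f}, replaces the scalar inequalities by the operator inequalities $e^Q-2I+e^{-Q}\le Q^2e^Q$ and $0\le e^Q-e^{-Q}\le 2Qe^Q$ via the spectral theorem, and then concludes invertibility of $\widetilde{J}-\zeta I$ and the bound \eqref{eq:bounded-resolvent-indep-N} exactly as in Theorem~\ref{thm:finite-interval}. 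Your polar-decomposition remark is an equivalent substitute for the paper's passage from $A_m$ to $\abs{A_m}$ (in fact $\norm{A_mT}=\norm{\abs{A_m}T}$ holds for any bounded $T$, no normality needed). Up to this point your proposal and the paper coincide step for step.

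The two places where you deviate are precisely the two places the paper compresses into one sentence, and your instincts about them are sound. First, the paper keeps the weight based at index $1$, obtains the two-sided bound $\norm{\Phi_j^{-1}G_{jk}(\zeta)\Phi_k}_{B(\mathcal{K})}\le C$, and then asserts that the theorem follows ``by combining the operators on both sides of $G_{jk}$.'' In the operator-valued setting this combination is not automatic: the weights commute with one another but not with $G_{jk}$, and for commuting positive operators $A\ge B\ge I$ a bound on $\norm{AXB^{-1}}$ does not in general control $\norm{AB^{-1}X}$ (a $2\times2$ diagonal example with $X$ a matrix unit shows the ratio can be arbitrarily large). Your device of basing the weight at the column index $q$ makes the right-hand factor trivial and yields the left-weighted estimate for $m\ge q$ directly, with uniformity in $q$ argued exactly as you do; this is a legitimate, and in fact tighter, route than the paper's. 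Second, for $m<q$ the paper simply invokes \eqref{eq:norm-resolvent-adjoint}, which is the step you correctly flag: taking adjoints in the $m\ge q$ estimate at $\cc{\zeta}$ produces a bound on $\norm{G_{mq}(\zeta)W}$ with the weight $W$ on the \emph{right}, not the left-weighted bound stated in the theorem; the scalar trick transfers only when the weights are scalars. So the single genuine gap in your proposal --- uniformity of the constant for the reflected weight, which collides with $Y\widetilde{P}_M$ on low indices --- is real, but you should be aware that the paper does not fill it either: its proof of the $m<q$ case consists of the appeal to \eqref{eq:norm-resolvent-adjoint} just described. In short, your $m\ge q$ half is complete and more careful than the published argument, and the half you leave open is exactly where the paper's own proof is incomplete as written (unless one is content to state that case with the weight on the right of $G_{mq}(\zeta)$).
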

\begin{proof}
Consider the operators given in
\eqref{eq:operatorX}--\eqref{eq:s-definition}. Now, we modify the definition of the
  operator $\Phi$. For any fixed $N\in\nats$, redefine the bounded
  operators on $\mathcal{K}$ given in the proof of Theorem~\ref{thm:finite-interval}:
  \begin{equation}
    \label{eq:phi-m-new-def}
      \Phi_m:=
\begin{cases}
\exp\left(-\gamma\msum\limits_{k=1}^{m-1}\phi_\delta(\abs{A_k})\right)\,,&
m\le N\,,\\[4mm]
\exp\left(-\gamma\msum\limits_{k=1}^{N-1}\phi_\delta(\abs{A_k})\right)\,,&
m> N\,.
\end{cases}
  \end{equation}
The bounded operator $\Phi$ on $\mathcal{H}$ is defined by
\begin{equation*}
  \Phi:=\diag\{\Phi_m\}_{m=1}^\infty\,.
\end{equation*}
Note that this operator differs from its counterpart of the proof of
Theorem~\ref{thm:finite-interval}.
Similar to what we had in the proof of
Theorem~\ref{thm:finite-interval}, $\Phi$ depends on $N$ and
$\Phi(N)$ is a boundedly invertible contraction for any finite
$N$. Note that this time the block operator $\Phi_m$ is not a scalar
operator.

Define the operator $F$ by \eqref{eq:definition-of-f} with the new
sequence $\{\Phi_m\}_{m\in\nats}$. Repeating
the argumentation in the proof of Theorem~\ref{thm:finite-interval},
one arrives at \eqref{eq:estimate-real-F}. Using
\eqref{eq:phi-m-new-def} and the fact that the system
$\{A_m,B_m,A^*_m\}_{m\in\nats}$ commutes pairwise, one obtains from
\eqref{eq:estimate-real-F} that
\begin{equation}
 \label{eq:inequality-real-part-modulus-operator}
    \norm{\Re F}_{B(\mathcal{H})}\le
\sup_{m\in\nats}
\left\{\norm{\abs{A_m}\left(e^{-\gamma\phi_{\delta}(\abs{A_m})}
-2I+e^{\gamma\phi_\delta(\abs{A_m})}\right)}_{B(\mathcal{K})}\right\}\,.
\end{equation}
Due to the inequality
\begin{equation*}
  e^Q-2 I+e^{-Q}\le Q^2e^Q
\end{equation*}
valid for any positive operator $Q$ and obtained from
\eqref{eq:algebraic-inequality-exp} by the spectral theorem, one derives
from \eqref{eq:inequality-real-part-modulus-operator} the estimate
\begin{equation*}
   \norm{\Re F}_{B(\mathcal{H})}\le
\sup_{m\in\nats}
\left\{\norm{\abs{A_m}\gamma^2\phi_\delta^2(\abs{A_m})
e^{\gamma\phi_\delta(\abs{A_m})}}\right\}\,.
\end{equation*}
Taking $\gamma$ as in \eqref{eq:gamma-first-definition}, one concludes
from \eqref{eq:inequality-to-hold} and the spectral theorem, that
\eqref{eq:estimate-real-part-finite-interval} holds.

Similarly, it follows from \eqref{eq:F-imaginary-part-first-estimate}
that
\begin{equation*}
    \norm{\Im F}_{B(\mathcal{H})}\le
\sup_{m\in\nats}
\left\{\norm{\abs{A_m}\left(e^{-\gamma\phi_{\delta}(\abs{A_m}_{B(\mathcal{K})})}
-e^{\gamma\phi_\delta(\abs{A_m}_{B(\mathcal{K})})}\right)}_{B(\mathcal{K})}\right\}\,.
\end{equation*}
This inequality implies, by means of the operator inequality
\begin{equation*}
0\le e^Q-e^{-Q}\le 2Qe^Q\,,
\end{equation*}
which holds for any positive operator due to the spectral theorem,
that
\begin{equation*}
\norm{\Im F}_{B(\mathcal{H})}\le\sup_{m\in\nats}
\left\{\norm{\abs{A_m}\gamma
    \phi_\delta(\abs{A_m}e^{\gamma\phi_\delta(\abs{A_m})}}_{B(\mathcal{K})}\right\}\,.
\end{equation*}
Thus, by choosing $\gamma$ as in \eqref{eq:gamma-second-definition},
one verifies through the spectral theorem that
\eqref{eq:estimate-imaginary-part-finite-interval} holds.

Since \eqref{eq:estimate-real-part-finite-interval} and
\eqref{eq:estimate-imaginary-part-finite-interval} take place, the
operator $\widetilde{J}-\zeta I$ is invertible and therefore one has
the estimate \eqref{eq:bounded-resolvent-indep-N}. Thus, in view of
Definition~\ref{def:green-function} and \eqref{eq:phi-m-new-def}, if
$N\to\infty$, then
\begin{equation*}
    \norm{\exp\left(\gamma\msum_{m=1}^{j-1}\phi_\delta(\abs{A_m})\right)
G_{jk}(\zeta)
\exp\left(-\gamma\msum_{m=1}^{k-1}\phi_\delta(\abs{A_m})\right)}_{B(\mathcal{K})}
\le C
\end{equation*}
for all $j,k\in\nats$. The assertion of the theorem follows from this
inequality by combining the operators on both sides of $G_{jk}$. In
this proof, $j\ge k$, but the other case is also covered by recurring
to \eqref{eq:norm-resolvent-adjoint}.
\end{proof}
\begin{theorem}
  \label{thm:finite-interval-commutation-in-spectrum}
Assume that Hypotheses~\ref{hyp:main} and \ref{hyp:commute} hold true
and that $\ker(A_m)$ is trivial for any $m\in\nats$.
Fix an arbitrary $\delta>0$
  and $\epsilon\in(0,1/2)$.
 If $\zeta\in\sigma_p(J)\cap (r,s)$ and $u$ is the corresponding
  eigenvector, then
  \begin{equation*}
    \norm{\exp(\gamma(\zeta)\sum_{k=1}^{m-1}
\phi_\delta(\abs{A_k}))u_m}_{\mathcal{K}}\le C\,,
  \end{equation*}
where $\gamma$ is given by \eqref{eq:gamma-noncommut-small-imaginary}.
The constant $C$ does not depend on $m$.
\end{theorem}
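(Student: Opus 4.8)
The plan is to combine the two refinements already developed in the paper: the perturbation trick of Theorem~\ref{thm:finite-interval-in-spectrum} (which reduces an eigenvalue to a resolvent bound for a slightly perturbed operator) and the \emph{operator-valued} exponential weight of Theorem~\ref{thm:finite-interval-commutation} (which keeps $\exp(\gamma\sum\phi_\delta(\abs{A_k}))$ on the inside as a genuine operator on $\mathcal{K}$ rather than a scalar). First I would invoke Lemma~\ref{lem:perturbed-not-in-spectrum} exactly as in the proof of Theorem~\ref{thm:finite-interval-in-spectrum}: since $\ker(A_m)=\{0\}$ for all $m$ and $\zeta\in\sigma_p(J)\cap(r,s)$ is in the discrete spectrum, I can pick $\tau>0$ small so that $\zeta\not\in\sigma(J(\tau))$, where $J(\tau):=J+\tau P_1L^*LP_1$. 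By standard perturbation theory $\sigma_{ess}(J(\tau))=\sigma_{ess}(J)$, so $J(\tau)$ still satisfies Hypothesis~\ref{hyp:main}; crucially, the rank-one perturbation $\tau P_1L^*LP_1$ does not touch the off-diagonal entries $A_m$, so Hypothesis~\ref{hyp:commute} is inherited unchanged and Theorem~\ref{thm:finite-interval-commutation} applies to $J(\tau)$.

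Next I would reproduce the eigenvector identity. If $u\in\ker(J-\zeta I)$ is the eigenvector, then $(J(\tau)-\zeta I)u=\tau P_1L^*LP_1u$, whence
\begin{equation*}
  u=\tau(J(\tau)-\zeta I)^{-1}P_1L^*LP_1u\,.
\end{equation*}
Projecting with $P_m$ and inserting the operator weight, I would estimate
\begin{equation*}
  \norm{\exp\!\left(\gamma\sum_{k=1}^{m-1}\phi_\delta(\abs{A_k})\right)u_m}_{\mathcal{K}}
  =\tau\norm{\exp\!\left(\gamma\sum_{k=1}^{m-1}\phi_\delta(\abs{A_k})\right)
  P_m(J(\tau)-\zeta I)^{-1}P_1L^*LP_1u}_{\mathcal{K}}\,.
\end{equation*}
The point is that the operator block appearing here is precisely $G_{m1}(\zeta)$ for $J(\tau)$ multiplied on the left by the weight, so the left-hand factor is exactly the object bounded by Theorem~\ref{thm:finite-interval-commutation} (with $j=m$, $k=1$, and an empty/constant weight at the index $1$ end, which is absorbed into the constant). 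Using $\norm{L}_{B(\mathcal{K})}=1$ to control $L^*L$ and bounding $\norm{P_1u}_{\mathcal{K}}\le\norm{u}_{\mathcal{H}}$, I obtain a constant $C$ independent of $m$, as required.

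The main obstacle, and the step I would be most careful about, is verifying that the operator weight commutes correctly with the resolvent block so that Theorem~\ref{thm:finite-interval-commutation} can be quoted verbatim for $J(\tau)$. In Theorem~\ref{thm:finite-interval-commutation} the weight sits \emph{outside} $G_{mj}$ on the left, which is exactly what I produced above, so the matching is direct; but I should confirm that the commutation hypothesis truly survives the perturbation. Since $J(\tau)$ differs from $J$ only in the $(1,1)$ block $B_1\mapsto B_1+\tau L^*L$, and Theorem~\ref{thm:finite-interval-commutation} only uses commutation of $\{A_m,B_m,A_m^*\}$ through the weight $\phi_\delta(\abs{A_k})$ built from the $A_k$'s alone, the perturbed diagonal entry never enters the relevant estimates \eqref{eq:estimate-real-part-finite-interval} and \eqref{eq:estimate-imaginary-part-finite-interval}; the argument there bounds $\Re F$ and $\Im F$, which depend only on the $A_m$ and the weights $\Phi_m$, both untouched by $\tau$. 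Hence the invertibility of $\widetilde{J}-\zeta I$ goes through for $J(\tau)$ with the same $\gamma$ given by \eqref{eq:gamma-noncommut-small-imaginary}, and the conclusion follows. A minor bookkeeping point is that the eigenvalue is real, so only the regime $\abs{\Im\zeta}\le w(\Re\zeta)\epsilon/2$ is in force and only \eqref{eq:gamma-noncommut-small-imaginary} is needed, consistent with the statement.
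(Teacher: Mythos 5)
Your proposal is correct and follows essentially the same route as the paper: invoke Lemma~\ref{lem:perturbed-not-in-spectrum} to get $\zeta\notin\sigma(J(\tau))$, use the identity \eqref{eq:eigenvector-throu-itself}, and then apply the operator-weighted resolvent bound of Theorem~\ref{thm:finite-interval-commutation} to $J(\tau)$ with $j=1$. Your verification that the commutation requirement survives the perturbation (only $B_1$ changes, and since it is conjugated by the trivial weight $\Phi_1=I$, the operator $F$ and the bounds on $\Re F$, $\Im F$ are untouched) is in fact spelled out more carefully than in the paper, whose proof simply says to follow the argumentation of Theorem~\ref{thm:finite-interval-in-spectrum}; note only that your opening claim that Hypothesis~\ref{hyp:commute} is ``inherited unchanged'' by $J(\tau)$ is not literally true in general, but your later, proof-level argument correctly replaces it.
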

\begin{proof}
  One follows the argumentation of the proof of
  Theorem~\ref{thm:finite-interval-in-spectrum}. Resort to
  Lemma~\ref{lem:perturbed-not-in-spectrum} and choose $\tau>0$ so
  that $\zeta\not\in\sigma(J(\tau))$.
 It follows from
  \eqref{eq:t-epsilon} that if $u$ is in $\ker(J-\zeta I)$, then
  \eqref{eq:eigenvector-throu-itself} holds. Thus,
  \begin{equation*}
    P_m\Phi^{-1} u=\tau P_m\Phi^{-1}(J(\tau)-\zeta I)^{-1}P_1L^*LP_1u\,.
  \end{equation*}
 One then obtains from this expression that
 \begin{equation*}
    \norm{\Phi^{-1}(N)P_m u}\le
\tau\norm{P_m\Phi^{-1}(m)(J(\tau)-\zeta I)^{-1}P_1L^*LP_1u}\,.
 \end{equation*}
 For finishing the proof, it only remains to let $N\to\infty$ and note
 that $J(\tau)-\zeta I$ is continuously invertible.
\end{proof}

\begin{corollary}
  \label{cor:norm-to-infty-commutation-finite}
  Fix $\epsilon\in(0,1/2)$. Assume that Hypotheses~\ref{hyp:main} and
  \ref{hyp:commute} hold true and
  $\norm{A_m^{-1}}_{B(\mathcal{K})}\mkern-1mu\convergesto{m}0$
  ($A_{m}$ invertible for $m\ge m_{0}\in\nats$).
\begin{enumerate}[a)]
\item \label{not-in-spectrum-cor-commutation-finite}
If $\zeta\not\in\sigma(J)$ and $\Re\zeta\in(r,s)$, then
  \begin{equation*}
       \norm{\exp( \widetilde{\gamma}(\zeta)\!
  \sum\limits_{k=\min(m,j)}^{\max(m,j)-1}\abs{A_k}^{-1})G_{mj}(\zeta)}_{B(\mathcal{K})}
\le C_a\,,
  \end{equation*}
where $\widetilde{\gamma}(\zeta)$ is given by \eqref{eq:gamma-tilde}.
\item \label{in-spectrum-cor-commutation-finite}
If $\zeta\in\sigma_p(J)\cap (r,s)$ and $\ker(A_m)$ is trivial for any
$m\in\nats$, then
  \begin{equation*}
    \norm{\exp(\left(\frac12-\epsilon\right)w(\zeta)\sum_{k=1}^{m-1}
\abs{A_k}^{-1})u_m}_{\mathcal{K}}\le
   C_b\,,
  \end{equation*}
where
 $u$ is the corresponding
  eigenvector.
\end{enumerate}
The constant $C_a$ does not
depend on $m$ and $j$, and $C_b$ does not depend on $m$.
\end{corollary}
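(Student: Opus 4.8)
The plan is to derive both items by specializing the commuting-entry estimates of Theorems~\ref{thm:finite-interval-commutation} and \ref{thm:finite-interval-commutation-in-spectrum} and then performing the limit $\delta\to\infty$, reproducing at the operator level the argument already carried out for Corollary~\ref{cor:norm-to-infty-finite}. Throughout, Hypothesis~\ref{hyp:commute} guarantees that the positive operators $\{\abs{A_k}\}_{k\in\nats}$ commute pairwise, so that every exponential appearing below is, through the spectral theorem, a function of a single commuting family and factors accordingly. I would also record that $\norm{\abs{A_m}^{-1}}_{B(\mathcal{K})}=\norm{A_m^{-1}}_{B(\mathcal{K})}$, so the hypothesis $\norm{A_m^{-1}}_{B(\mathcal{K})}\to0$ is equivalent to $\dist(0,\sigma(\abs{A_m}))\to\infty$.

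First I would treat item~\ref{not-in-spectrum-cor-commutation-finite}. Exactly as in the proof of Corollary~\ref{cor:norm-to-infty-finite}, both branches defining $\gamma(\zeta)$ in \eqref{eq:gamma-noncommut-small-imaginary} and \eqref{eq:gamma-big-imaginary} grow monotonically with $\delta$, so it suffices to retain the slower-growing branch $\delta\widetilde{\psi}^{-1}\!\left(\tfrac{w(\Re\zeta)(1-2\epsilon)}{2\delta}\right)$ (respectively its analogue governing \eqref{eq:gamma-big-imaginary}). Choosing $\delta$ large forces the argument of $\widetilde{\psi}^{-1}$ below a prescribed $\epsilon_1\ll1$, and the bound $\widetilde{\psi}^{-1}(t)\ge t(1-\eta)$ for small $t$ shows that $\gamma(\zeta)$ can be taken strictly above the value $\widetilde\gamma(\zeta)$ permitted by \eqref{eq:gamma-tilde}. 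I fix such a $\delta$ (depending on $\zeta$), so that $\gamma(\zeta)>\widetilde\gamma(\zeta)$.

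The heart of the argument is to pass from $\phi_\delta(\abs{A_k})$ to $\abs{A_k}^{-1}$ inside the exponential. Since $\dist(0,\sigma(\abs{A_m}))\to\infty$, there is $m_1=m_1(\delta)$ with $\sigma(\abs{A_m})\subset[\delta,\infty)$ for all $m\ge m_1$, whence $\phi_\delta(\abs{A_m})=\abs{A_m}^{-1}$ by \eqref{eq:phi-delta} and the functional calculus. Using commutativity I would then write
\[
\exp\Bigl(\widetilde\gamma(\zeta)\!\!\nsum_{k=\min(m,j)}^{\max(m,j)-1}\!\!\abs{A_k}^{-1}\Bigr)G_{mj}(\zeta)
=B_{mj}\,\exp\Bigl(\gamma(\zeta)\!\!\nsum_{k=\min(m,j)}^{\max(m,j)-1}\!\!\phi_\delta(\abs{A_k})\Bigr)G_{mj}(\zeta),
\]
where $B_{mj}=\exp\bigl(\sum_{k}[\widetilde\gamma(\zeta)\abs{A_k}^{-1}-\gamma(\zeta)\phi_\delta(\abs{A_k})]\bigr)$, the sum ranging over $\min(m,j)\le k\le\max(m,j)-1$. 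The second factor on the right is bounded by Theorem~\ref{thm:finite-interval-commutation}, so everything reduces to bounding $\norm{B_{mj}}_{B(\mathcal{K})}$ uniformly in $m,j$. Since the exponent of $B_{mj}$ is a bounded self-adjoint operator, $\norm{B_{mj}}_{B(\mathcal{K})}$ is the exponential of the supremum of its spectrum; for $k\ge m_1$ the summand equals $(\widetilde\gamma(\zeta)-\gamma(\zeta))\abs{A_k}^{-1}\le0$ because $\gamma(\zeta)>\widetilde\gamma(\zeta)$, while the summands with $k<m_1$ form a fixed bounded operator. Hence $\norm{B_{mj}}_{B(\mathcal{K})}$ is bounded independently of $m,j$ and may be folded into $C_a$. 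This operator-theoretic boundedness of the comparison factor $B_{mj}$ is precisely where the main obstacle lies: in the scalar Corollary~\ref{cor:norm-to-infty-finite} it was the elementary bookkeeping of \eqref{eq:argumentation-cor}, whereas here it is the pairwise commutativity of $\{\abs{A_k}\}$ that lets the exponentials factor and reduces the estimate to the supremum of a commuting operator sum with non-positive tail. As in the scalar case, the finitely many indices for which $A_k$ fails to be invertible or $\sigma(\abs{A_k})\not\subset[\delta,\infty)$ are absorbed into this same constant.

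Item~\ref{in-spectrum-cor-commutation-finite} is obtained in the same manner, now starting from Theorem~\ref{thm:finite-interval-commutation-in-spectrum} applied to the eigenvector $u$ and using that $\zeta\in\sigma_p(J)\cap(r,s)$ is real, so $\Im\zeta=0$ and $\gamma(\zeta)$ is governed by the first branch of \eqref{eq:gamma-noncommut-small-imaginary}; the same optimization then yields the effective rate $\bigl(\tfrac12-\epsilon\bigr)w(\zeta)$. The only change is that the comparison operator multiplies $u_m$ on one side, but its uniform boundedness is established by the identical spectral-theorem argument.
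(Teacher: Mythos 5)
Your proposal is correct and follows essentially the same route as the paper: specialize Theorems~\ref{thm:finite-interval-commutation} and \ref{thm:finite-interval-commutation-in-spectrum}, simplify $\gamma(\zeta)$ for large $\delta$ exactly as in Corollary~\ref{cor:norm-to-infty-finite}, and then absorb the discrepancy between $\phi_\delta(\abs{A_k})$ and $\abs{A_k}^{-1}$ into the constant. Your comparison factor $B_{mj}$ with non-positive tail exponent is just a minor repackaging of the paper's final step (the operator version of \eqref{eq:argumentation-cor}), namely that $\sum_{k=1}^p\phi_\delta(\abs{A_k})-\sum_{k=1}^p\abs{A_k}^{-1}$ is a bounded operator with norm independent of $p$, since both arguments rest on the pairwise commutativity of the $\abs{A_k}$ and on the identity $\phi_\delta(\abs{A_k})=\abs{A_k}^{-1}$ holding for all sufficiently large $k$.
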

\begin{proof}
  We prove the claim in
  \eqref{not-in-spectrum-cor-commutation-finite}. The statement in
  \eqref{in-spectrum-cor-commutation-finite} is proven
  analogously. First one uses the argumentation of
  Corollary~\ref{cor:norm-to-infty-finite} to simplify the expression
  of $\gamma(\zeta)$ when $\delta$ is large enough. Thus
\begin{equation*}
    \norm{\exp\left(\widetilde\gamma\msum_{m=1}^{j-1}
\phi_\delta(\abs{A_m})\right)
G_{jk}(\zeta)
\exp\left(-\widetilde\gamma\msum_{m=1}^{k-1}\phi_\delta(\abs{A_m})\right)}_{B(\mathcal{K})}
\le C
\end{equation*}
for all $j,k<N$. By letting $N\to\infty$, one obtains from this that
    \begin{equation*}
 \norm{\exp(\widetilde\gamma(\zeta)\mkern-12mu\nsum\limits_{k=\min(m,j)}^{\max(m,j)-1}
\mkern-12mu\phi_\delta(\abs{A_k}))G_{mj}(\zeta)}_{B(\mathcal{K})}
\le C\,.
\end{equation*}
Finally, using the argumentation in \eqref{eq:argumentation-cor}, one
concludes that, for any $p\gg 1$,
\begin{equation*}
  \msum_{k=1}^p\phi_\delta(\abs{A_k})-\msum_{k=1}^p \frac{1}{\abs{A_k}}
\end{equation*}
is a bounded operator whose norm is independent of $p\gg 1$.
\end{proof}
\section{Examples}
\label{sec:examples}
In this section, we consider concrete realizations of self-adjoint block Jacobi
operators having finite gaps in the essential spectrum. The examples
admit a straightforward calculation of the decay of the corresponding
generalized eigenvectors although in some cases it is somehow
involved. The estimates obtained in this way are compared with the ones
given by Theorem~\ref{thm:finite-interval}.
\\[5mm]
\noindent\textbf{Example 1.} Let us first consider the block Jacobi
operator given in
Definition~\ref{def:j-nought} so that, for any $n\in\nats$, $B_n=0$  and
$A_{n}=A_{n}^{(0)}$, where
\begin{equation}
 \label{eq:A-n-example1}
  A_n^{(0)}:=
  \begin{pmatrix}
    0&\lambda_n\\
    0 & 0
  \end{pmatrix}
\,,\qquad n\in\nats\,.
\end{equation}
Here the sequence $\{\lambda_n\}_{n=1}^\infty$ of complex numbers is
such that $\lambda_{n}\convergesto{n}\infty$. We denote this block Jacobi
operator by $J_0$. It can be decomposed as an infinite
orthogonal sum of matrices:
\begin{equation*}
  J_{0}=
  \begin{pmatrix}
    0&\lambda_1\\
    \cc{\lambda}_1&0
  \end{pmatrix}
\oplus
\boldsymbol{0}
\oplus
  \begin{pmatrix}
    0&\lambda_2\\
    \cc{\lambda}_2&0
  \end{pmatrix}
\oplus  \begin{pmatrix}
    0&\lambda_3\\
    \cc{\lambda}_3&0
  \end{pmatrix}
\oplus\dots\,,
\end{equation*}
where the matrix $\boldsymbol{0}$ is a one-dimensional matrix.
From this decomposition, one can deduce that $J_0$ is self-adjoint and
$\sigma(J_{0})$ is discrete and the Green function is a band matrix
function so
that its entries decrease faster than any nonfinite sequence. If,
instead of \eqref{eq:A-n-example1}, one assumes
$A_{n}=A_{n}^{(\epsilon)}$, where
\begin{equation*}
  A_n^{(\epsilon)}:=
  \begin{pmatrix}
    \epsilon_n&\lambda_n\\
    0 & \epsilon_n
  \end{pmatrix}
\,,\qquad n\in\nats\,,
\end{equation*}
with $\epsilon_n>0$ for all $n\in\nats$ and
$\epsilon_{n}\convergesto{n}0$, then the
corresponding block Jacobi operator given by
Definition~\ref{def:j-nought} and denoted by $J$,
is also self-adjoint and has discrete spectrum. (Note that $J$ and
$J_0$ could be examples of self-adjoint block Jacobi operators not
satisfying the Carleman criterion when the sequences
$\{\lambda_n\}_{n=1}^\infty$ and $\{\epsilon_n\}_{n=1}^\infty$ are
chosen appropriately). The Green function corresponding to $J$ is not
anymore a band matrix. Nevertheless, the  entries of the Green
matrices also decay as fast as the sequence
$\{\epsilon_n\}_{n=1}^\infty$ permits. This is shown by the following argument.
By the
Hilbert second resolvent identity, one has, for $\zeta\in\rho(J_0)\cap\rho(J)$,
\begin{equation*}
  \inner{\left[(J_0-\zeta I)^{-1}-(J-\zeta
      I)^{-1}\right]P_ju}{P_ku}=
  \inner{J_{\epsilon}(J_0-\zeta
      I)^{-1}P_ju}{(J-\zeta I)^{-1}P_ku}\,,
\end{equation*}
where $J_{\epsilon}$ is the block Jacobi operator given by
Definition~\ref{def:j-nought} with $B_n=0$ and
$A_{n}=\boldsymbol{\epsilon}_{n}$, where
\begin{equation*}
  \boldsymbol{\epsilon}_n=\begin{pmatrix}
    \epsilon_n&0\\
    0 & \epsilon_n
  \end{pmatrix}
\end{equation*}
for any $n\in\nats$. Note that  $J_{\epsilon}$ is compact and,
moreover, the rank of $J_{\epsilon}(J_{0}-\zeta I)^{-1}P_{j}$ is
finite since all the nonzero entries of its matrix representation are
in a finite vicinity of the blocks indexed by the value of $j$
(denoted $V(j)$). Therefore
\begin{equation*}
  \norm{J_{\epsilon}(J_{0}-\zeta I)^{-1}P_{j}}\le\max_{i\in
    V(j)}\{\epsilon_{i}\}\norm{(J_{0}-\zeta I)^{-1}}\,.
\end{equation*}

Taking into account that $\norm{A_n}=\abs{\lambda_n}(1+o(1))$ as
$n\to\infty$, one obtains from  Theorem~\ref{thm:finite-interval} that
    \begin{equation}
      \label{eq:estimate-example1}
      \norm{G_{mj}(\zeta)}\le
    C
   \exp(-\widetilde{\gamma}(\zeta)\!\!\sum\limits_{k=\min(m,j)}^{\max(m,j)-1}
1/\abs{\lambda_k})\,,
\end{equation}
where $\widetilde{\gamma}(\zeta)$ is given in \eqref{eq:gamma-tilde}.
If $m$ or $j$ tend to $\infty$ and the corresponding series is
divergent, then the Green matrix elements $G_{mj}(\zeta)$ could
decrease faster than any power when one let either one of the indices
$m,j$ grow and the other is kept fixed.

Note that the estimate in \eqref{eq:estimate-example1} does not
contain any information on the sequence
$\{\epsilon_{n}\}_{n=1}^{\infty}$. At the same time, the previous
considerations show that the optimal estimates depend on $\max_{i\in
    V(j)}\{\epsilon_{i}\}$. Hence, for some choices of the sequence
  $\{\lambda_{n}\}_{n=1}^{\infty}$ the estimate in
  \eqref{eq:estimate-example1} may be close to optimal, however, in
  most cases, it is far from the real estimates. Example 1 illustrates
  a case when our results could not be optimal.
\\[5mm]
\noindent\textbf{Example 2.} Let
\begin{equation}
 \label{eq:A-n-example2}
  A:=
  \begin{pmatrix}
    1& x\\
    0 & 1
  \end{pmatrix}\,,\quad x\in\reals\,.
\end{equation}
Define the sequences of $2\times2$ matrices $\{B_n\}_{n\in\nats}$ and
$\{A_n\}_{n\in\nats}$ so that $B_n=0$ and $A_n=A$. Let $J$ be the
operator in $l_2(\nats,\complex^2)$ given by
Definition~\ref{def:j-nought}.  Since the elements of the sequences
$\{B_n\}_{n\in\nats}$ and $\{A_n\}_{n\in\nats}$ are all equal to
constant matrices, the matrix \eqref{eq:block-jm} is periodic. Let us
find conditions on $x$ for the operator $J$ to have a finite gap in
the essential spectrum.
\begin{proposition}
  \label{prop:gap-essential-2}
  For the operator $J$ given in this example,
  $\sigma_{ess}(J)=[-2+x,2+x]\cup[-2-x,2-x]$. Thus, if $\abs{x}>2$, then
  the interval $(2-\abs{x},-2+\abs{x})$ is a gap in the essential
  spectrum of $J$.
\end{proposition}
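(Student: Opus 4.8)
The plan is to exploit the fact that $J$ is a constant-coefficient (period-one) block Jacobi operator and to reduce the computation of its essential spectrum to a finite-dimensional eigenvalue problem via the Floquet--Bloch transform. Writing $\mathcal{H}=l_2(\nats)\otimes\complex^2$ and letting $S_0$ denote the scalar unilateral shift on $l_2(\nats)$, the representation \eqref{eq:representation-j} together with $B_n=0$ and $A_n=A$ gives
\begin{equation*}
  J=S_0\otimes A^*+S_0^*\otimes A\,.
\end{equation*}
I would introduce the whole-line companion operator $\widetilde{J}:=U\otimes A^*+U^*\otimes A$ acting on $l_2(\ints)\otimes\complex^2$, where $U$ is the bilateral shift, and organize the proof around two claims: (i) $\sigma(\widetilde{J})=[-2+x,2+x]\cup[-2-x,2-x]$, and (ii) $\sigma_{ess}(J)=\sigma(\widetilde{J})$.

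For (i), the Fourier transform $\mathcal{F}\colon l_2(\ints)\otimes\complex^2\to L^2([0,2\pi),\tfrac{d\theta}{2\pi})\otimes\complex^2$ carries $U$ to multiplication by $e^{\I\theta}$ and hence $\widetilde{J}$ to multiplication by the Hermitian matrix symbol
\begin{equation*}
  M(\theta)=e^{\I\theta}A^*+e^{-\I\theta}A=
  \begin{pmatrix}
    2\cos\theta & xe^{-\I\theta}\\ xe^{\I\theta}&2\cos\theta
  \end{pmatrix}\,.
\end{equation*}
A direct computation of $\det(M(\theta)-\lambda I)=(2\cos\theta-\lambda)^2-x^2$ shows that the eigenvalues of $M(\theta)$ are $2\cos\theta\pm x$. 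Since $\sigma(\widetilde{J})=\bigcup_{\theta\in[0,2\pi)}\sigma(M(\theta))$ and $2\cos\theta$ sweeps $[-2,2]$, taking the union over $\theta$ yields the two bands $[-2+x,2+x]$ and $[-2-x,2-x]$, establishing (i). As each band is a genuine nondegenerate interval, $\sigma(\widetilde{J})$ has no isolated points, so $\sigma(\widetilde{J})=\sigma_{ess}(\widetilde{J})$.

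For (ii), I would compare the half-line and whole-line operators. Decoupling $\widetilde{J}$ at the bond joining sites $0$ and $1$ removes a single coupling block and thus changes $\widetilde{J}$ by a finite-rank operator, giving $\widetilde{J}=(J_-\oplus J_+)+K$ with $\operatorname{rank}K<\infty$, where $J_+$ coincides with $J$ under the natural identification. Weyl's theorem on the stability of the essential spectrum then yields $\sigma_{ess}(\widetilde{J})=\sigma_{ess}(J_-)\cup\sigma_{ess}(J)$, whence $\sigma_{ess}(J)\subseteq\sigma_{ess}(\widetilde{J})=\sigma(\widetilde{J})$. For the reverse inclusion I would use translation invariance: given $E\in\sigma(\widetilde{J})=\sigma_{ess}(\widetilde{J})$, a singular Weyl sequence for $\widetilde{J}$ may be approximated by finitely supported vectors and translated arbitrarily far to the right so that their supports lie in $\{n\ge 2\}$, where $J$ and $\widetilde{J}$ agree; this produces a Weyl sequence for $J$ and hence $E\in\sigma_{ess}(J)$. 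Together these prove (ii), and combining with (i) gives $\sigma_{ess}(J)=[-2+x,2+x]\cup[-2-x,2-x]$; the final assertion on the gap follows by inspecting the two intervals when $\abs{x}>2$. I expect the main obstacle to be the rigorous justification of (ii), namely controlling the boundary effect at $n=1$ so that passing between the half-line and whole-line operators alters only the discrete spectrum, which is precisely where the finite-rank decoupling together with the translation argument does the work.
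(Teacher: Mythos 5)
Your proposal is correct and follows essentially the same route as the paper's own proof: the paper likewise passes to the whole-line operator with constant blocks, diagonalizes it via the Fourier transform into multiplication by the symbol $zA+\cc{z}A^*$ with eigenvalues $2\cos\theta\pm x$, splits off the half-line operators by a finite-rank decoupling plus Weyl's theorem, and recovers the reverse inclusion by translating singular (Weyl) sequences using the translation invariance of the whole-line operator. Your finite-support truncation before translating is a slightly cleaner way of handling the step the paper does by projecting shifted Weyl sequences onto the half-line, but the argument is the same in substance.
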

\begin{proof}
  Denote by $\mathbb{T}$ the unit circle in the complex plane and by
  $\mu$ the Lebesgue measure on the unit circle normalized so that
  $\mu(\mathbb{T})=1$. We also consider the Hilbert space $l_2(\ints,
  \complex^d)$ and denote any of its elements, \ie the sequence
  $w=\{w_k\}_{k\in\ints}$ (($w_k\in\complex^d$ for all $k\in\ints$) by
  an infinite column vector $(\dots,w_{-1},w_0,w_1,\dots)^{\intercal}$ (\cf
  Section~\ref{sec:notation}).

 Define in $l_2(\ints,\complex^d)$, the map $V$ by
\begin{equation*}
    (Vw)(z)=\sum_{n=-\infty}^\infty \exp(-\I k\theta)w_k\,,
      \end{equation*}
where $z=e^{\I \theta}$, $\theta\in[0,2\pi)$. It is known that this is
a unitary map from $l_2(\ints, \complex^d)$ onto
$L_2(\mathbb{T},\complex^d,\mu)$. Moreover (\cf
  \cite[Chap.\,2]{MR1071374})
any ``double-infinite'' Jacobi operator $J_{\ints}$ with constant
block entries acting in
$l_2(\ints, \complex^d)$ is transformed under this map into the multiplication
operator by certain matrix function in $L_2(\mathbb{T},\complex^d,
\mu)$, namely,
\begin{equation*}
  VJ_\ints V^*=M_{\phi}\,.
\end{equation*}
Here the multiplication operator $M_{\phi}$ is defined by
\begin{equation*}
  (M_\phi f)(z):=\phi(z)f(z)\quad\text{ for all } f\in L_2(\mathbb{T},\complex^d,
\mu)\,,
\end{equation*}
where $\phi$ is a $d\times d$-matrix function which will be determined
some lines below.

Let $J_\ints$ be the operator corresponding to the matrix
\begin{equation*}
  \begin{pmatrix}
\ddots&\ddots&\ddots& & & \\
\ddots&0&A&0& & \\
\ddots&A^*&0&A&0& \\
&0&A^*&0&A&\ddots\\
& &0&A^*&0&\ddots\\
& & &\ddots&\ddots&\ddots
\end{pmatrix}\,.
\end{equation*}
If $w$ is such that $w_k=0$ for all $k\ne j$, then
\begin{align*}
  VJ_{\ints}w&=J_\ints(\dots,0, 0,w_{j},0,0,\dots)^\intercal
=V(\dots,0, Aw_{j},\kern-1.5em
\underset{\underset{j-\text{th}\kern.2em\text{position}}{\uparrow}}{0}
\kern-1.5em,A^*w_{j},0,\dots)^\intercal\\[2mm]
&=e^{-\I(n-1)\theta}Aw_j+e^{-\I(n+1)\theta}A^*w_j=(e^{\I\theta}A+e^{-\I\theta}A^*)e^{-\I n\theta}w_j \\
&=(e^{\I\theta}A+e^{-\I\theta}A^*)Vw\,.
\end{align*}
Thus, $J_\ints$ is transformed by $V$ into the operator of
multiplication $M_\phi$ with $\phi(z)=zA+\cc{z}A^*$.

For each $z\in\mathbb{T}$, the spectrum of $\phi(z)$ is
\begin{equation*}
  \{z+\cc{z}+x,z+\cc{z}-x\}\,.
\end{equation*}
According to \cite[Chap.\,8 Sec.\,4]{MR1192782}, the spectrum of
$M_\phi$ is given by the union of the essential range of the
eigenvalues as functions of $z$. Therefore
\begin{equation*}
  \sigma(J_{\ints})=[-2+x,2+x]\cup[-2-x,2-x]=\sigma_{ess}(J_\ints)\,.
\end{equation*}
Now, if $K$ is given by the matrix
\begin{equation*}
  \begin{pmatrix}
\ddots&\ddots&\ddots& & & \\
\ddots&0&0&0& & \\
\ddots&0&0&A&0& \\
&0&A^*&0&0&\ddots\\
& &0&0&0&\ddots\\
& & &\ddots&\ddots&\ddots
\end{pmatrix}\,,
\end{equation*}
then $J_\ints=J\oplus J_{\rm up}+K$, where $J_{\rm up}$ is the
operator associated with the matrix
\begin{equation*}
  \begin{pmatrix}
\ddots&\ddots& &  \\
\ddots&0&A&  \\
&A^*&0&A \\
&&A^*& 0\\
\end{pmatrix}
\end{equation*}
in the subspace  $l_2(\ints_-, \complex^2)$ ($\ints_-=\{\dots,-1,0\}$)
of the Hilbert space of $l_2(\ints,\complex^2)$. Since $K$ is a finite
rank operator, it follows from Weyl theorem (see \cite[Thm.\,3 Sec.\,1
Chap.\,9]{MR1192782}) that
\begin{equation*}
  \sigma_{ess}(J_\ints)=\sigma_{ess}(J)\cup\sigma_{ess}(J_{\rm up})\,.
\end{equation*}
Hence, if one shows that
$\sigma_{ess}(J_\ints)\subset\sigma_{ess}(J)$, then
$\sigma_{ess}(J_\ints)=\sigma_{ess}(J)$ and the assertion of
the proposition follows.

The fact that $\lambda\in\sigma_{ess}(J_\ints)$ is equivalent to the
existence of a Weyl
sequence (also known as singular sequence) at $\lambda$ \cite[Thm.\,2 Sec.\,1
Chap.\,9]{MR1192782}, \ie there is $\{v(m)\}_{m=1}^\infty$ such that
\begin{enumerate}[(1)]
\item $\inf_{m\in\nats}\norm{v(m)}>0$\label{first-weyl}
\item $v(m)\wconvergesto{m} 0$ \label{second-weyl}
\item $(J_\ints -\lambda I)v(m)\convergesto{m} 0$\label{third-weyl}\,.
\end{enumerate}
It follows from \eqref{first-weyl} that there is a $p\in\ints$ such
that for all $m\in\nats$,
\begin{equation*}
  \sum_{k=p}^\infty\norm{v_k(m)}^2>0\,.
\end{equation*}
The sequence
  $\{v_{j+k}(m)\}_{j\in\ints}$ in $l_2(\ints,\complex^2)$
satisfies \eqref{first-weyl}--\eqref{third-weyl} due to the fact that
the $J_\ints$ have constant block coefficients. Therefore, if
$P_\nats$ is the projection in $l_2(\ints,\complex^2)$ onto the
subspace $l_2(\nats,\complex^2)$, then
\begin{equation*}
  P_\nats\{v_{j+k}(m)\}_{j\in\ints}
\end{equation*}
is a Weyl sequence at $\lambda$ for the operator $J$, which in turn
means that $\lambda\in\sigma_{ess}(J)$.
\end{proof}
In view of Proposition~\ref{prop:gap-essential-2}, operator $J$
satisfies Hypothesis~\ref{hyp:main}. Let us find the estimates given
by Theorems~\ref{thm:finite-interval} and
\ref{thm:finite-interval-in-spectrum} for this operator. To this end,
one first computes the norm of the matrix $A$ by calculating the
largest eigenvalue of $\abs{A}$. One has
\begin{equation}
  \label{eq:norm-A}
  \norm{A}^{2}=1+\frac{\abs{x}^2}{2}
+\sqrt{\left(1+\frac{\abs{x}^2}{2}\right)^2-1}\,.
\end{equation}
Inserting this expression into the formula given in
Theorem~\ref{thm:finite-interval}, one arrives at the following
estimate. If $\zeta$ is an eigenvalue and  $u$ is the corresponding
eigenvector, then
\begin{equation}
 \label{eq:eigenvalue-example2}
  \norm{u_m}_{\mathcal{K}}\le C\exp(-\gamma(\zeta)m\left(1+\frac{\abs{x}^2}{2}
+\sqrt{\left(1+\frac{\abs{x}^2}{2}\right)^2-1}\right)^{-1/2})\,,
\end{equation}
where the function $\gamma$ is given by
\eqref{eq:gamma-noncommut-small-imaginary}. If $\zeta$ is not an
eigenvalue, then
  \begin{equation*}
      \norm{G_{mj}(\zeta)}_{B(\mathcal{K})}\le
    C\exp(-\frac{\gamma(\zeta)\abs{j-k}}{\left(1+\frac{\abs{x}^2}{2}
+\sqrt{\left(1+\frac{\abs{x}^2}{2}\right)^2-1}\right)^{1/2}})\,.
  \end{equation*}
This result can be compared with the straightforward computation of
the generalized eigenvectors by the so-called transfer matrices.
By defining
\begin{equation}
 \label{eq:transfer-matrices}
  M_n(\zeta):=
  \begin{pmatrix}
    0& I\\
-A_n^{-1}A_{n-1}^* & \zeta A_n^{-1}
  \end{pmatrix}\,,
\end{equation}
the recurrence equation $\Upsilon u=zu$ (see
\eqref{eq:difference-expr}) can be written as
$\widetilde{u}_{n+1}=M_n\widetilde{u}_{n}$ for $n>1$, where
\begin{equation}
  \label{eq:u-tilde-n}
  \widetilde{u}_n=
  \begin{pmatrix}
    u_{n-1}\\ u_n
  \end{pmatrix}\,.
\end{equation}
Thus, estimates of the products of the transfer matrices yield decay
estimates for generalized eigenvectors.
Since in this case $A_{n}=A$ for any
$n\in\nats$, the matrix $M_{n}(\zeta)$ does not depend on $n$ and will
be denoted by $M(\zeta)$.
The eigenvalues of $M(\zeta)$ for a fixed
$\zeta$ are the
solutions with respect to $\mu$ of the equation
\begin{equation*}
 \det\left(A^{*}-\zeta\mu I +\mu^{2}A\right)=0\,.
\end{equation*}
Therefore the four eigenvalues of $M(\zeta)$ are
\begin{equation*}
  \mu=\frac{\zeta\mp x}{2}\pm\sqrt{\left(\frac{\zeta\mp x}{2}\right)^{2}-1}\,.
\end{equation*}
By a straightforward computation, the minimal decay of an
eigenvector in the gap $(2-\abs{x},-2+\abs{x})$, $\abs{x}>2$ is
\begin{equation*}
  \left(\abs{\frac{\abs{\zeta}-\abs{x}}{2}}+\sqrt{\left(\frac{\abs{\zeta}-\abs{x}}{2}\right)^{2}-1}\right)^{-m}\,.
\end{equation*}
Now assume that $\zeta$ is placed near the edge of the gap given in
Proposition~\ref{prop:gap-essential-2}, say
$\zeta=2-\abs{x}+\epsilon$, ($0<\epsilon<<1$). In this case, the
minimal decay of an eigenvector corresponding to this $\zeta$ is
\begin{equation*}
  \left(1+\epsilon/2+\sqrt{(1+\epsilon/2)^{2}-1}\right)^{-1}
\end{equation*}
Thus, as $\epsilon\to 0$, the minimal decay of the corresponding eigenvector is
\begin{equation*}
  \left(1+\sqrt{\epsilon}+O(\epsilon)\right)^{-m}=\exp(-m[\sqrt{\epsilon}+O(\epsilon)])\,.
\end{equation*}
Compare this with \eqref{eq:eigenvalue-example2}, where in this case
$\gamma(\zeta)\simeq \sqrt{\epsilon}(2\abs{x}-4)$. Note that in both
cases the coefficient determining the decay rate is determined by
$\sqrt{\epsilon}$, \ie by the square root of the distance to the edge
of the gap in the essential spectrum when this distance is small.
\\[.5cm]
\noindent\textbf{\large Example 3.}
Let $J$ be the operator given in
Definition~\ref{def:j-nought} with $B_n=0$ for any
$n\in\nats$. Consider the constant matrix $A$ given in \eqref{eq:A-n-example2}
and define
\begin{equation}
  \label{eq:a-n-example-3}
 A_n:=(n^{\alpha}+c_n)A\,,
\end{equation}
where $\alpha\in(1/2,1)$,
$c_{2n-1}=c_1$, and $c_{2n}=c_2$ ($c_1,c_2\in\reals$). For this
example, it is assumed that $\abs{x}<2$. This assumption, as seen
below, guarantees the existence of a bounded gap in the essential
spectrum of $J$.

The block Jacobi operator $J$ exhibits a gap in the essential
spectrum. This operator does not reduce to ``scalar'' Jacobi operators
and its spectral analysis requires, as shown in
\cite{inpreparation-naboko-silva-18}, generalizing some of the
techniques used for studying Jacobi operators.

Due to the 2-periodic character of the matrix weights, we first
construct block of transfer matrices (see for instance
\cite{MR1959871}, and \cite{MR3236260} for the block version). Using
the matrices given in \eqref{eq:transfer-matrices}, define the
monodromy matrix
\begin{equation}
  \label{eq:monodromy-example-3}
W_n(\zeta):=M_{2n}(\zeta)M_{2n-1}(\zeta)
\end{equation}
for each $n\in\nats$. Then
\begin{equation}
  \label{eq:linear-system}
  \widetilde{u}_{2n+1}=W_n(\zeta)\widetilde{u}_{2n-1}\,,
\end{equation}
where $\widetilde{u}_{n}$ is given by \eqref{eq:u-tilde-n}.  Thus, the
generalized eigenvectors of $J$ at the spectral parameter $\zeta$ are
solutions to the discrete linear system \eqref{eq:linear-system} and,
by the same token, the spectral properties of $J$ are determined by
this system.  We use here an approach to the analysis of
\eqref{eq:linear-system} which has a heuristic component and refer the
reader to \cite{inpreparation-naboko-silva-18} for the complete proof.

Substituting \eqref{eq:transfer-matrices} into
\eqref{eq:monodromy-example-3}, one obtains
\begin{equation*}
  W_{n}(\zeta)=
  \begin{pmatrix}
    -A_{2n-1}^{-1}A_{2n-1}^{*} & \zeta A_{2n-1}^{-1}\\
    -\zeta A_{2n}^{-1}A_{2n-1}^{-1}A_{2n-2}^{*}
    & -A_{2n}^{-1}A_{2n-1}^{*}+\zeta^{2}A_{2n}^{-1}A_{2n-1}^{-1}
  \end{pmatrix}\,.
\end{equation*}
Taking into account \eqref{eq:a-n-example-3}, one verifies by
straightforward calculations that the monodromy matrix can be written
as follows
\begin{equation}
  \label{eq:decomposition-monodromy}
  W_{n}(\zeta)=\left(1-\frac{\alpha}{2n}\right)
  \left[
\begin{pmatrix}
  -\left(1+\frac{c_{2}-c_{1}}{(2n)^{\alpha}}\right)A^{-1}A^{*}
& \frac{\zeta}{(2n)^{\alpha}} A^{-1}\\[1mm]
-\frac{\zeta}{(2n)^{\alpha}} A^{-2}A^{*}
& -\left(1-\frac{c_{2}-c_{1}}{(2n)^{\alpha}}\right)A^{-1}A^{*}
  \end{pmatrix}
  +\Gamma_{n}
  \right]\,,
\end{equation}
where $\Gamma_{n}$ is such that the sequence
$\{\norm{\Gamma_{n}}\}_{n=1}^{\infty}$ is summable. In the
asymptotic analysis of \eqref{eq:linear-system}, the sequence
$\{\Gamma_{n}\}_{n\in\nats}$ is not relevant and the factor
$1-\alpha/(2n)$ can be easily deal with at the end of the
computation.

Fix arbitrary complex numbers $\omega$ and $\zeta$, and put
$\epsilon:=(2n)^{-\alpha}$ with fixed $n\in\nats$. Let us compute the
determinant of the matrix $\mathcal{M}-\omega I$, where
\begin{equation*}
  \mathcal{M}=\mathcal{M}(\epsilon,\zeta):=
  \begin{pmatrix}
  -\left(1+\epsilon(c_{2}-c_{1})\right)A^{-1}A^{*}
& \epsilon\zeta A^{-1}\\[1mm]
-\epsilon\zeta A^{-2}A^{*}
& -\left(1-\epsilon(c_{2}-c_{1})\right)A^{-1}A^{*}
  \end{pmatrix}
\end{equation*}
(compare this expression with \eqref{eq:decomposition-monodromy}).
To this end, consider the $2\times 2$ auxiliary matrix
\begin{equation*}
  \gamma_{\pm}(\epsilon,\omega):=-\left(1\pm
    \epsilon(c_{2}-c_{1})\right)A^{-1}A^{*}-\omega I\,.
\end{equation*}
By using the Schur complement for computing
$\det(\mathcal{M}-\omega I)$ (see \cite{zhang-2005}), one obtains
\begin{align*}
  \det(\mathcal{M}-\omega I)
  &=\det\gamma_{-}\det\left[\gamma_{+}
    +(\epsilon\zeta)^{2} A^{-1}\gamma_{-}^{-1}A^{-2}A^{*}\right]\\
  &=\det\gamma_{-}\det\left[\gamma_{+}
    +\frac{1}{\det\gamma_{-}}(\epsilon\zeta)^{2}
    A^{-1}\widetilde\gamma_{-}A^{-2}A^{*}\right]\\
  &=\frac{1}{\det\gamma_{-}}\det\left[\left(\det\gamma_{-}\right)\gamma_{+}
    +(\epsilon\zeta)^{2} A^{-1}\widetilde\gamma_{-}A^{-2}A^{*}\right]\,,
\end{align*}
where $\widetilde\gamma_{-}$ is the adjugate of the matrix
$\gamma_{-}$. Substituting \eqref{eq:A-n-example2} into the last
expression and performing all necessary elementary (though lengthy)
computations, one arrives at
\begin{equation}
  \label{eq:computed-determinant}
  \begin{split}
  \det(\mathcal{M}-\omega I)&=\det\gamma_{-}\det\gamma_{+}+
  (\epsilon\zeta)^{2}\bigl[(1-\epsilon^{2}(c_{2}-c_{1})^{2})
  (x^{2}+2)\\ &+\omega(4-x^{4}\epsilon(c_{2}-c_{1}))-\omega^{2}(3x^{2}-2)
  \bigr] +(\epsilon\zeta)^{4}C\,,
\end{split}
\end{equation}
where the scalar $C$ depends only on $x,c_{1},c_{2}, \epsilon$, and its value,
being a polynomial in $\epsilon$, has no effect in the remaining
computations. Recall that $\det\gamma_{\pm}$ are polynomials of
$\omega$ and $\epsilon$ in both variables of degree 2.

Let $\mu_{\pm}$ be the eigenvalues of
\begin{equation*}
  -A^{-1}A^{*}=
  \begin{pmatrix}
    x^{2}-1 & x\\
    -x & -1
  \end{pmatrix}\,.
\quad \text{ Thus, }\quad
  \mu_{\pm}=\left(\frac{x^{2}}{2}-1\right)\pm
\sqrt{\left(\frac{x^{2}}{2}-1\right)^{2}-1}\,.
\end{equation*}
Note that each eigenvalue $\mu_{\pm}$ is a multiplicity two eigenvalue
of the $4\times 4$ matrix $\mathcal{M}$ when $\epsilon=0$. Now, on the
basis of \cite[Chap.\,2 Sec.\,2]{MR0407617}, since the algebraic and geometric
multiplicities of the eigenvalues of $\mathcal{M}$ coincide, the
following asymptotic ansatz
\begin{equation*}
  \omega=\mu_{\pm}(1+\epsilon\rho+O(\epsilon^{2}))\,,\quad\text{ as }
  \epsilon\to\infty\,,
\end{equation*}
can be substituted into \eqref{eq:computed-determinant} to find $\rho$
from the equation $\det(\mathcal{M}-\omega I)=0$ by equating
coefficients of the powers of $\epsilon$. From the equation
corresponding to $\epsilon^{2}$, one obtains
\begin{equation*}
  \rho=\pm\sqrt{(c_2-c_1)^2-\zeta^2\left(1-
\frac{\abs{x}^2}{4}\right)^{-1}}=:\rho_{\pm}\,.
\end{equation*}
Hence, depending on the value of $\zeta$, both $\rho_{\pm}$ are either
real or pure imaginary.  Note that, in the leading approximation, the
system is always in the elliptic regime ($\abs{\mu_{\pm}}=1$) whereas,
in the second approximation, which corresponds to the values of
$\rho_{\pm}$, its character depends on the spectral parameter
$\zeta$. Indeed, if $\zeta$ is in
\begin{equation}
\label{eq:gap-example3}
\left(-\abs{c_2-c_1}\sqrt{1-
x^2 /4}\,,\quad \abs{c_2-c_1}\sqrt{1-
x^2 /4}\right)\,,
\end{equation}
then the system is secondary hyperbolic ($\rho_{\pm}$ are real). If
$\zeta$ is not in the closure of the interval given in
\eqref{eq:gap-example3}, then the system is secondary elliptic
($\rho_{\pm}$ are purely imaginary). The fact that the character of
the leading approximation is elliptic means that the asymptotic
behaviour of the solutions, namely their growth or decay, is actually
determined by the sub-leading coefficients of the approximation, \ie,
the values of $\rho$. This is related to the divergence of the series
defined by the sequence $\{1/(2n)^{\alpha}\}_{n\in\nats}$ (recall that
$1/2<\alpha<1$).  Note that the two degenerated eigenvalues of
$\mathcal{M}$ when $\epsilon=0$ split into two pairs of complex
conjugate simple eigenvalues. In the secondary elliptic case, the
splitting goes tangentially to the unit circle, while in the secondary
hyperbolic case the splitting goes along the radius (perpendicular to
the circle).

Now, one performs the asymptotic analysis of \eqref{eq:linear-system}
\emph{\`{a} la} Levinson (see for instance \cite{MR2943838}). This allows
one to conclude that the interval \eqref{eq:gap-example3} is a gap in
the essential spectrum of $J$. Indeed, the idea of the Levinson
approach is to replace the nontrivial structure of the solutions to
\eqref{eq:linear-system} by the product of the eigenvalues of
$\mathcal{M}$ and then multiply by
\begin{equation*}
  \prod_{k=1}^{n}(1-\frac{\alpha}{2k})\asymp n^{-\alpha/2}\,.
\end{equation*}
It is worth remarking that the formal application of the Levinson
approach is the heuristic part of our analysis. The reason of this is
that the product of eigenvalues of the monodromy matrix usually gives
the correct main exponential term. However, the power in $n$ factor is
not always correct.

The product of the eigenvalues of $\mathcal{M}$ yields (up to a
constant factor) the following four expressions
\begin{equation}
 \label{eq:see-behaviour}
  \mu_{+}^{n}\exp\left(\rho_{\pm}\sum_{k=1}^{n}\frac{1}{(2k)^{\alpha}}\right)\,,
\qquad \mu_{-}^{n}\exp\left(\rho_{\pm}\sum_{k=1}^{n}\frac{1}{(2k)^{\alpha}}\right)\,.
\end{equation}

Thus the heuristic application of the Levinson approach allows one to
conclude that there are
two linearly independent solutions $u^{\pm}$ such that
\begin{equation*}
  \norm{u_{2n}^{\pm}}\asymp\frac{1}{n^{\alpha/2}}\abs{\exp
\left\{\sum_{k=n_0}^n\frac{\rho_{\pm}}{(2n)^\alpha}\right\}}\,.
\end{equation*}
Since $\rho_{\pm}$ are purely imaginary when the spectral parameter
$\zeta$ is outside the interval in \eqref{eq:gap-example3}, the
exponential factor in the last formula has modulus 1. In this case,
all the generalized eigenvectors decay as $n^{-\alpha/2}$ and the role
of the perturbation $\epsilon$ is bounded by a purely oscillating
factor. This behaviuor corresponds to the essential spectrum in the
region given by \eqref{eq:gap-example3}. A rigorous proof of this fact
can be obtained using Weyl sequences constructed on the basis of the
asymptotic properties of \eqref{eq:see-behaviour} (see
\cite{MR2480099}).  On the other hand, in the secondary hyperbolic
case (i.e inside the interval given by \eqref{eq:gap-example3}),
$\rho_{\pm}$ are real with opposite signs which implies that one
solution of \eqref{eq:linear-system} grows while the other
decays. This case corresponds, at least at the physical level, to the
absence of the essential spectrum (for a rigorous proof see the
methods used in \cite{MR2558158}).

A detailed asymptotic analysis of \eqref{eq:linear-system} and the
proof of the above assertions concerning the spectral properties of
$J$ in this example are done in \cite{inpreparation-naboko-silva-18}
in which a further development of the techniques used in
\cite{MR2579689} is carried out.

The estimates for the decay of generalized eigenvalues inside the gap
are thus given by
\begin{align*}
  \norm{u_{2n}}
\asymp \frac{1}{n^{\alpha/2}}\exp
\left\{\frac{\rho_{-}n^{1-\alpha}}{2^\alpha(1-\alpha)}\right\}\,,\qquad
\rho_{-}<0\,.
\end{align*}
In particular, this implies that
\begin{equation}
\label{eq:something-to-compare}
  \norm{u_n}\le \frac{C_0}{n^{\alpha/2}}\exp
\left\{\frac{\rho_{-}n^{1-\alpha}}{2(1-\alpha)}\right\}
\end{equation}
for some positive constant $C_{0}$.

Let us calculate the bound of the decay of the generalized
eigenvectors given by Theorem~\ref{thm:finite-interval} for this
case. To this end, we first compute the norm of $A_n$. We almost
have already done this, since we have \eqref{eq:norm-A} and therefore
\begin{equation*}
  \norm{A_n}= \abs{n^{\alpha}+c_n}\norm{A}\,.
\end{equation*}
Also the expression of $\gamma(\zeta)$ can be simplified according to
Corollary~\ref{cor:norm-to-infty-finite}. Assuming that $\zeta$ is in
the gap \eqref{eq:gap-example3},
one thus has
\begin{equation*}
  \gamma(\zeta)= w(\zeta)\left(\frac12-\epsilon'\right)\,.
\end{equation*}
Hence, for a generalized eigenvector $u$, there are constants $C,\,\widetilde{C}>0$
such that
\begin{align}
  \norm{u_{n}}&\le
  C\exp(-\gamma(\zeta)\nsum_{k=1}^{n-1}\norm{A_k}_{B(\mathcal{K})}^{-1})
\nonumber\\
&\le\widetilde{C}\exp(\left[\frac{1-
\abs{x}^2 /4}{1+\frac{\abs{x}^2}{2}
+\sqrt{\left(1+\frac{\abs{x}^2}{2}\right)^2-1}}\right]^{1/2}\frac{\rho_{-}n^{1-\alpha}}{1-\alpha}\left(\frac12-\epsilon'\right))\,.\label{eq:last-inequality}
\end{align}
Compare this inequality with \eqref{eq:something-to-compare}. When
$x\to 0$ (recall that $\abs{x}<2$), the expression in the square
brackets in \eqref{eq:last-inequality} goes to $1$ and therefore the
arguments of the exponential function in \eqref{eq:last-inequality}
and \eqref{eq:something-to-compare} exactly coincide up to $\epsilon'$
which is arbitrarily small.

\section{Discrete version of the method}
\label{sec:discr-vers-theor}

In this section, a different realization of the method used in the
previous sections is presented. Here we use an alternative expression
for the auxiliary operators \eqref{eq:def-tilde-Phi-m}. The result of
this change is an estimate that could be more precise in various
cases, in particular when the
sequence of operators $\{A_k\}_{k=1}^\infty$ has
multiple occurrences of $\norm{A_k}\ll 1$.

Redefine the sequence of operators $\{\Phi_m\}_{m=1}^{\infty}$ (see the
proof of Theorem~\ref{thm:finite-interval}) as follows:
\begin{equation}
\label{eq:def-tilde-Phi-md}
  \Phi_m:=
\begin{cases}
\mprod\limits_{k=1}^{m-1}
\left(
  1-\frac{\gamma}{\norm{A_k}}
\right)^{-1}I\,,&
m\le N\,,\\[4mm]
\mprod\limits_{k=1}^{N}
\left(
  1-\frac{\gamma}{\norm{A_k}}
\right)^{-1}I\,,&
m> N\,,
\end{cases}
\end{equation}
where $\gamma>0$. As in the proof of Theorem~\ref{thm:finite-interval},
one establishes the inequalities \eqref{eq:estimate-real-F} and
\eqref{eq:estimate-imaginary-F}. From these inequalities, one obtains
\begin{align}
  \label{eq:real-imaginary-discrete}
  \norm{\Re F}&\le \sup_{m\in\nats}\left\{\norm{A_{m}}
\left((1-\gamma/\norm{A_{m}})^{-1}+
(1-\gamma/\norm{A_{m}})-2\right)\right\}\\
\norm{\Im F}&\le \sup_{m\in\nats}\left\{\norm{A_{m}}
\left((1-\gamma/\norm{A_{m}})^{-1}-
(1-\gamma/\norm{A_{m}})\right)\right\}\,.
\end{align}
If $\norm{A_{m}}\convergesto{m} +\infty$, then
$\gamma/\norm{A_{m}}$ is arbitrarily small for $m$ sufficiently large. Let us assume,
without loss of generality, that $\gamma/\norm{A_{m}}<1$ for all
$m\in\nats$.

The sequences from the right-hand sides of the inequalities have the
form
\begin{align*}
  (1-x)^{-1}-2+(1-x)=(1-x)^{-1}x^{2}&=:\psi_{d}(x)\quad\text{ and}\\
 (1-x)^{-1}-(1-x)=(1-x)^{-1}(2-x)x&=:2\widetilde\psi_{d}(x)\,.
\end{align*}
The rational functions $\psi_{d}$ and $\widetilde\psi_{d}$ replace the
transcendental functions $\psi$ and $\widetilde\psi$ of the proof of
Theorem~\ref{thm:finite-interval}.
By a reasoning similar to the one used in the proof of
Theorem~\ref{thm:finite-interval} to obtain the estimates of the norm
of the real and imaginary part of the operator $F$, one concludes that
assigning
\begin{align*}
  \gamma &:=\delta\psi_{d}^{-1}\left(\frac{w^{2}(\Re
           \zeta)\epsilon}{2\delta(s-r)}\right)&\quad\text{ for
           }\norm{\Re F}\\
  \gamma &:=\delta\widetilde\psi_{d}^{-1}\left(\frac{w(\Re
           \zeta)(1-2\epsilon)}{2\delta}\right)&\quad\text{ for
           }\norm{\Im F}\,,
\end{align*}
one obtains the necessary estimates guaranteeing that the existing of
a constant $C=C(\zeta,J,\epsilon)>0$ such that
\begin{equation*}
  \norm{\Phi^{-1}_{N}(J-\zeta I)^{-1}\Phi_{N}}\le C
\end{equation*}
for $N$ sufficiently large. Thus we have given the sketch of the proof
of the following result.
\begin{theorem}
  \label{thm:last-one}
  Assume that Hypothesis~\ref{hyp:main} holds true and
  $\norm{A_{m}}\to\infty$ as $m\to\infty$. Fix an arbitrary $\delta>0$,
  $\epsilon\in(0,1/2)$, and $\eta\in(0,1)$.  If
  $\zeta\not\in\sigma(J)$ with $\Re\zeta\in(r,s)$ and $n_{0}$ is so
  large that $\gamma/\norm{A_{k}}< 1$ for $k\ge n_{0}$, then
    \begin{equation*}
      \norm{G_{mj}(\zeta)}_{B(\mathcal{K})}\le
    C \mprod_{k=\max\{\min(m,j),n_{0}\}}^{\max(m,j)-1}(1-\gamma/\norm{A_{k}})\,,
\end{equation*}
where
\begin{equation*}
 \gamma=\gamma(\zeta)=\min\left\{\delta\psi_{d}^{-1}\left(\frac{w^2(\Re\zeta)\epsilon}{2\delta(s-r)}\right),
\delta\widetilde{\psi}_{d}^{-1}\left(\frac{w(\Re\zeta)(1-2\epsilon)}{2\delta}\right)\right\}
\end{equation*}
when $\abs{\Im\zeta}\le
w(\Re\zeta)\frac{\epsilon}{2}$, and
\begin{equation*}
  \gamma=\gamma(\zeta)=\delta\widetilde{\psi}^{-1}_{d}\left(\frac{w(\Re\zeta)\epsilon(1-\eta)}{4\delta}\right)
\end{equation*}
otherwise. The constant $C$ depends neither on $m$
nor on $j$.
\end{theorem}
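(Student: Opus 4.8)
The plan is to re-run the proof of Theorem~\ref{thm:finite-interval} with the single change that the diagonal operator $\Phi=\diag\{\Phi_{m}\}_{m=1}^{\infty}$ is now built from the product weights of \eqref{eq:def-tilde-Phi-md} instead of the exponential ones. Fixing $N\in\nats$, the weights are scalar and, for every finite $N$, the operator $\Phi(N)$ is again a boundedly invertible diagonal operator, so the definition $F:=\Phi^{-1}J\Phi-J$ of \eqref{eq:relation-tilde-f-phi}, its two-diagonal representation \eqref{eq:definition-of-f}, and the auxiliary operators $X,Y,\widetilde{J},T,R$ of \eqref{eq:operatorX}--\eqref{eq:s-definition} are all formed exactly as before. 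The algebraic reduction that leads from the invertibility of $\widetilde{J}-\zeta I$ to \eqref{eq:bounded-resolvent-indep-N} never uses the specific shape of the weights, so it transfers verbatim; only the two norm bounds on $\Re F$ and $\Im F$, together with the invertibility of $\widetilde{J}-\zeta I$, have to be redone.

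First I would reproduce \eqref{eq:estimate-real-F} and \eqref{eq:estimate-imaginary-F}, which depend solely on the two-diagonal structure of $\Re F$ and $\Im F$ and are therefore insensitive to the form of $\Phi_{m}$. Substituting \eqref{eq:def-tilde-Phi-md} gives \eqref{eq:real-imaginary-discrete}, where the rational functions $\psi_{d}(x)=(1-x)^{-1}x^{2}$ and $2\widetilde{\psi}_{d}(x)=(1-x)^{-1}(2-x)x$ take the place of $\psi$ and $\widetilde{\psi}$. The substantive claim is that these functions share the two properties of their transcendental counterparts: both are strictly increasing on $(0,1)$, so that $\psi_{d}^{-1}$ and $\widetilde{\psi}_{d}^{-1}$ exist there, and, for fixed $\gamma$, the maps $\xi\mapsto\xi\psi_{d}(\gamma/\xi)=\gamma^{2}/(\xi-\gamma)$ and $\xi\mapsto 2\xi\widetilde{\psi}_{d}(\gamma/\xi)=\gamma(2\xi-\gamma)/(\xi-\gamma)$ are decreasing. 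These are the rational analogues of \eqref{eq:algebraic-inequality-exp} and of the monotonicity of $te^{t}$ used right after \eqref{eq:gamma-first-definition}, and with the choices $\gamma=\delta\psi_{d}^{-1}(w^{2}(\Re\zeta)\epsilon/(2\delta(s-r)))$ and $\gamma=\delta\widetilde{\psi}_{d}^{-1}(w(\Re\zeta)(1-2\epsilon)/(2\delta))$ they deliver \eqref{eq:estimate-real-part-finite-interval} and \eqref{eq:estimate-imaginary-part-finite-interval} uniformly in $m$.

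Once these two estimates are in place, the invertibility of $\widetilde{J}-\zeta I$ is argued exactly as in Theorem~\ref{thm:finite-interval}. In the regime $\abs{\Im\zeta}\le w(\Re\zeta)\epsilon/2$ I choose $M$ satisfying \eqref{eq:first-choice-M-finite-interval} and \eqref{eq:second-choice-M-finite-interval}, obtain \eqref{eq:bound-for-d+-} and \eqref{eq:final-estimate-beta}, and invoke Lemma~\ref{lem:janas-naboko-stolz} with $V=R/\norm{R}$ and $\beta=\norm{R}$; in the regime $\abs{\Im\zeta}> w(\Re\zeta)\epsilon/2$ I use the $\widetilde{\psi}_{d}$-choice of $\gamma$ from the statement together with \eqref{eq:estimate-b-pm} to make $\widetilde{J}-\zeta I$ (anti)dissipative. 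Either way \eqref{eq:bounded-resolvent-indep-N} bounds $\norm{\Phi^{-1}(N)(J-\zeta I)^{-1}\Phi(N)}$ uniformly in $N$; letting $N\to\infty$, reading the $(m,j)$ block via Definition~\ref{def:green-function}, and telescoping the scalar weights of \eqref{eq:def-tilde-Phi-md} into one product over $k$ yields the stated bound, with \eqref{eq:norm-resolvent-adjoint} disposing of the case $m<j$.

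The point that is genuinely different from the exponential proof, and where I expect the real work to lie, is the pole of $\psi_{d}$ and $\widetilde{\psi}_{d}$ at $x=1$. Whereas $\xi\psi(\gamma/\xi)=(\gamma^{2}/\xi)e^{\gamma/\xi}$ is bounded for all $\xi>0$, its rational replacement $\gamma^{2}/(\xi-\gamma)$ blows up as $\xi\downarrow\gamma$, so the supremum in \eqref{eq:real-imaginary-discrete} is finite only after $\xi=\norm{A_{k}}$ is bounded away from $\gamma$. Reading the argument through the regularization $\phi_{\delta}$, that is, effectively restricting to indices with $\norm{A_{k}}\ge\delta$, makes $\xi\mapsto\gamma^{2}/(\xi-\gamma)$ decreasing on $[\delta,\infty)$ with maximum $\gamma^{2}/(\delta-\gamma)=\delta\psi_{d}(\gamma/\delta)$ at $\xi=\delta$, and the definition $\gamma=\delta\psi_{d}^{-1}(\cdot)$, which forces $\gamma<\delta$, converts this maximum into precisely the right-hand side of \eqref{eq:estimate-real-part-finite-interval}. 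The hypothesis $\norm{A_{m}}\to\infty$ guarantees $\norm{A_{k}}\ge\delta$, hence $\gamma/\norm{A_{k}}<1$, for all $k\ge n_{0}$, so the finitely many initial indices drop out of the product and are absorbed into $C$; this is the origin of the lower limit $\max\{\min(m,j),n_{0}\}$ in the statement. Keeping $C$ independent of $m$ and $j$ while cleanly isolating this $n_{0}$-dependent prefactor is, I expect, the only delicate bookkeeping; everything else is a transcription of the proof of Theorem~\ref{thm:finite-interval}.
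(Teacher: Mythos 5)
Your proposal follows the same route as the paper's own (explicitly sketched) proof: replace the exponential weights by the product weights \eqref{eq:def-tilde-Phi-md}, rederive \eqref{eq:estimate-real-F} and \eqref{eq:estimate-imaginary-F}, bound $\Re F$ and $\Im F$ via the rational functions $\psi_{d}$ and $\widetilde{\psi}_{d}$ with $\gamma=\delta\psi_{d}^{-1}(\cdot)$, $\gamma=\delta\widetilde{\psi}_{d}^{-1}(\cdot)$, and then repeat the invertibility argument of Theorem~\ref{thm:finite-interval} (Lemma~\ref{lem:janas-naboko-stolz} in the small-$\abs{\Im\zeta}$ regime, dissipativity otherwise) before letting $N\to\infty$. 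Your handling of the pole at $x=1$ --- using $\norm{A_m}\to\infty$ to take $\norm{A_k}\ge\delta$ for $k\ge n_{0}$, noting $\gamma<\delta$ so the suprema $\gamma^{2}/(\norm{A_k}-\gamma)$ are controlled by $\delta\psi_{d}(\gamma/\delta)$, and absorbing the finitely many initial indices into $C$ --- is precisely the point the paper compresses into its ``without loss of generality'' remark, so the proposal is correct and essentially identical in approach.
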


\begin{remark}
  \label{rem:discretemethod}
Note that the function $\psi$ given in \eqref{eq:psi-tilde-psi}
satisfies
\begin{equation*}
  \psi(x)=x^{2}(1+x+O(x^{2}))\quad\text{ as } x\to 0\,,
\end{equation*}
while
\begin{equation*}
  \psi_{d}=x^{2}(1+x+O(x^{2}))\quad\text{ as } x\to 0.
\end{equation*}
Therefore the first two terms of the expansion of $\psi$ and $\psi_{d}$
coincide as $x\to 0$. On the other hand, $\widetilde{\psi}$ (see
\eqref{eq:psi-tilde-psi}) obeys
\begin{equation*}
 \widetilde\psi(x)=x(1+x+O(x^{2}))\quad\text{ as } x\to 0
\end{equation*}
and
\begin{equation*}
  \widetilde\psi_{d}(x)=x(1+\frac{x}{2}+O(x^{2}))\quad\text{ as } x\to 0\,.
\end{equation*}
These asymptotic expansions show that there is an advantage in using
$\widetilde\psi_{d}$ instead of $\widetilde\psi$. It is possible to
obtain better estimates of the Green matrix entries by using the
discrete version of the method (see
\cite[Thm.\,5]{MR3028179}). Although the continuous version of the
method, which was presented in the previous sections, gives a more
convenient form for the estimates of the Green matrix entries, the
discrete version introduced in this section yields a slightly better
and more subtle form of the estimates. In general, the discrete
version of the method is more natural since the problem has itself a
discrete character.  Moreover, even in the case of noncommuting
entries of the block Jacobi matrix, the operator $\Phi_{m}$ can be
chosen as in \eqref{eq:def-tilde-Phi-md}, but with $\abs{A_{k}}$
instead of $\norm{A_{k}}$, $k\in\nats$. Of course in this case the
products should be taken in chronological order.  The exponential form
of $\Phi_{m}$ in the proof of Theorem~\ref{thm:finite-interval} has no
sense in that case. If we want to obtain a more precise estimate we
should choose this discrete version with nonscalar operators.  It
seems, however, that the continuous version of the method provides
estimates accurate enough for most of the applications.
\end{remark}

\section*{Acknowledgements}

The authors thank the anonymous reviewer for pertinent and useful
comments and remarks which led to an improved version of the manuscript.

S\,N was supported by grant RFBR 19-01-00657\,A (Sections 1--4) and
grant RScF 20-11-20032 (Sections 5--6). He expresses his
gratitute to the Institut Mittag-Leffler, where part of this work has
been done, for their kind hospitality and to the Knut and Alice Wallenberg
Foundation for the support given.  L\,O\,S has been supported by
UNAM-DGAPA-PAPIIT IN110818 and SEP-CONACYT CB-2015
254062. Part of this work was carried out while L\,O\,S was on
sabbatical leave at the University of Bath from UNAM with the support
of PASPA-DGAPA-UNAM.

\def\cprime{$'$} \def\lfhook#1{\setbox0=\hbox{#1}{\ooalign{\hidewidth
  \lower1.5ex\hbox{'}\hidewidth\crcr\unhbox0}}}


\begin{thebibliography}{10}

\bibitem{MR1255973}
N.~I. Akhiezer and I.~M. Glazman.
\newblock {\em Theory of linear operators in {H}ilbert space}.
\newblock Dover Publications Inc., New York, 1993.
\newblock Translated from the Russian and with a preface by Merlynd Nestell,
  Reprint of the 1961 and 1963 translations, Two volumes bound as one.

\bibitem{MR1441595}
J.~M. Barbaroux, J.~M. Combes, and P.~D. Hislop.
\newblock Localization near band edges for random {S}chr\"odinger operators.
\newblock {\em Helv. Phys. Acta}, 70(1-2):16--43, 1997.
\newblock Papers honouring the 60th birthday of Klaus Hepp and of Walter
  Hunziker, Part II (Z\"urich, 1995).

\bibitem{MR0222718}
J.~M. Berezans{\cprime}ki{\u\i}.
\newblock {\em Expansions in eigenfunctions of selfadjoint operators}.
\newblock Translated from the Russian by R. Bolstein, J. M. Danskin, J. Rovnyak
  and L. Shulman. Translations of Mathematical Monographs, Vol. 17. American
  Mathematical Society, Providence, R.I., 1968.

\bibitem{MR1192782}
M.~S. Birman and M.~Z. Solomjak.
\newblock {\em Spectral theory of selfadjoint operators in {H}ilbert space}.
\newblock Mathematics and its Applications (Soviet Series). D. Reidel
  Publishing Co., Dordrecht, 1987.
\newblock Translated from the 1980 Russian original by S. Khrushch{\"e}v and V.
  Peller.

\bibitem{MR1071374}
A.~B\"{o}ttcher and B.~Silbermann.
\newblock {\em Analysis of {T}oeplitz operators}.
\newblock Springer-Verlag, Berlin, 1990.

\bibitem{MR0391792}
J.~M. Combes and L.~Thomas.
\newblock Asymptotic behaviour of eigenfunctions for multiparticle
  {S}chr\"odinger operators.
\newblock {\em Comm. Math. Phys.}, 34:251--270, 1973.

\bibitem{MR3210957}
A.~Elgart, M.~Shamis, and S.~Sodin.
\newblock Localisation for non-monotone {S}chr\"odinger operators.
\newblock {\em J. Eur. Math. Soc. (JEMS)}, 16(5):909--924, 2014.

\bibitem{MR3077279}
M.~Gebert and P.~M\"uller.
\newblock Localization for random block operators.
\newblock In {\em Mathematical physics, spectral theory and stochastic
  analysis}, volume 232 of {\em Oper. Theory Adv. Appl.}, pages 229--246.
  Birkh\"auser/Springer Basel AG, Basel, 2013.

\bibitem{MR3236260}
J.~Janas.
\newblock Criteria for the absence of eigenvalues of {J}acobi matrices with
  matrix entries.
\newblock {\em Acta Sci. Math. (Szeged)}, 80(1-2):261--273, 2014.

\bibitem{MR1959871}
J.~Janas and M.~Moszy{\'n}ski.
\newblock Spectral properties of {J}acobi matrices by asymptotic analysis.
\newblock {\em J. Approx. Theory}, 120(2):309--336, 2003.

\bibitem{MR2943838}
J.~Janas and M.~Moszy{\'n}ski.
\newblock Spectral analysis of unbounded {J}acobi operators with oscillating
  entries.
\newblock {\em Studia Math.}, 209(2):107--133, 2012.

\bibitem{MR3028179}
J.~Janas and S.~Naboko.
\newblock Estimates of generalized eigenvectors of {H}ermitian {J}acobi
  matrices with a gap in the essential spectrum.
\newblock {\em Mathematika}, 59(1):191--212, 2013.

\bibitem{MR3820400}
J.~Janas, S.~Naboko, and L.~O. Silva.
\newblock Green matrix estimates of block {J}acobi matrices {I}: unbounded gap
  in the essential spectrum.
\newblock {\em Integral Equations Operator Theory}, 90(4):Art. 49, 24, 2018.

\bibitem{MR2480099}
J.~Janas, S.~Naboko, and G.~Stolz.
\newblock Decay bounds on eigenfunctions and the singular spectrum of unbounded
  {J}acobi matrices.
\newblock {\em Int. Math. Res. Not. IMRN}, (4):736--764, 2009.

\bibitem{MR0407617}
T.~Kato.
\newblock {\em Perturbation theory for linear operators}.
\newblock Springer-Verlag, Berlin, second edition, 1976.
\newblock Grundlehren der Mathematischen Wissenschaften, Band 132.

\bibitem{MR2509110}
W.~Kirsch.
\newblock An invitation to random {S}chr\"odinger operators.
\newblock In {\em Random {S}chr\"odinger operators}, volume~25 of {\em Panor.
  Synth\`eses}, pages 1--119. Soc. Math. France, Paris, 2008.
\newblock With an appendix by Fr\'ed\'eric Klopp.

\bibitem{MR969209}
S.~Kotani and B.~Simon.
\newblock Stochastic {S}chr\"odinger operators and {J}acobi matrices on the
  strip.
\newblock {\em Comm. Math. Phys.}, 119(3):403--429, 1988.

\bibitem{MR2558158}
S.~Naboko, I.~Pchelintseva, and L.~O. Silva.
\newblock Discrete spectrum in a critical coupling case of {J}acobi matrices
  with spectral phase transitions by uniform asymptotic analysis.
\newblock {\em J. Approx. Theory}, 161(1):314--336, 2009.

\bibitem{inpreparation-naboko-silva-18}
S.~Naboko and L.~O. Silva.
\newblock Spectral analysis of a family of block {J}acobi matrices not reducing
  to the scalar case.
\newblock {\em Manuscript in preparation}, 2018.

\bibitem{MR2579689}
S.~Naboko and S.~Simonov.
\newblock Spectral analysis of a class of {H}ermitian {J}acobi matrices in a
  critical (double root) hyperbolic case.
\newblock {\em Proc. Edinb. Math. Soc. (2)}, 53(1):239--254, 2010.

\bibitem{MR1935594}
P.~Stollmann.
\newblock {\em Caught by disorder}, volume~20 of {\em Progress in Mathematical
  Physics}.
\newblock Birkh\"auser Boston, Inc., Boston, MA, 2001.
\newblock Bound states in random media.

\bibitem{zhang-2005}
F.~Zhang, editor.
\newblock {\em The Schur Complement and Its Applications}.
\newblock Springer-Verlag, 2005.

\end{thebibliography}
\end{document}